\newcommand{\vect}[1]{\boldsymbol{{#1}}}
\newcommand{\E}{\mathbb{E}}
\newcommand{\Var}{\mathrm{Var}}
\renewcommand{\d}{\mathrm{d}}
\newcommand{\nbb}{\mathbb{N}}
\DeclareMathOperator{\Ran}{Ran}
\newcommand{\given}{\, | \,}
\newcommand{\defdista}{\stackrel{\mathcal{D}}{=}}
\renewcommand{\emph}[1]{\textit{#1}}
\begin{document}

\title{Renewal reward perspective on linear switching diffusion systems\thanks{MVC is supported by The Ohio State University President's Postdoctoral Scholars Program and by the Mathematical Biosciences Institute at The Ohio State University through NSF DMS-1440386. JF, PRK, and SAM are supported by NIH R01GM122082-01.}
}


\author{Maria-Veronica Ciocanel  \and
        John Fricks \and
        Peter R.~Kramer \and
        Scott A.~McKinley 
}


\institute{Maria-Veronica Ciocanel \at
              Mathematical Biosciences Institute, The Ohio State University\\
              Tel.: +614-688-3334 \\
              \email{ciocanel.1@mbi.osu.edu}      
           \and
           John Fricks \at
              School of Mathematical and Statistical Sciences, Arizona State University 
           \and
           Peter R.~Kramer \at
              Department of Mathematical Sciences, Rensselaer Polytechnic Institute  
           \and
           Scott A.~McKinley \at
              Department of Mathematics, Tulane University
              }

\date{Received: date / Accepted: date}

\maketitle

\begin{abstract}
In many biological systems, the movement of individual agents of interest is commonly characterized as having multiple qualitatively distinct behaviors that arise from various biophysical states. This is true for vesicles in intracellular transport, micro-organisms like bacteria, or animals moving within and responding to their environment. For example, in cells the movement of vesicles, organelles and other intracellular cargo are affected by their binding to and unbinding from cytoskeletal filaments such as microtubules through molecular motor proteins. A typical goal of theoretical or numerical analysis of models of such systems is to investigate the effective transport properties and their dependence on model parameters. While the effective velocity of particles undergoing switching diffusion dynamics is often easily characterized in terms of the long-time fraction of time that particles spend in each state, the calculation of the effective diffusivity is more complicated because it cannot be expressed simply in terms of a statistical average of the particle transport state at one moment of time. However, it is common that these systems are regenerative, in the sense that they can be decomposed into independent cycles marked by returns to a base state. Using decompositions of this kind, we calculate effective transport properties by computing the moments of the dynamics within each cycle and then applying renewal-reward theory. This method provides a useful alternative large-time analysis to direct homogenization for linear advection-reaction-diffusion partial differential equation models. Moreover, it applies to a general class of semi-Markov processes and certain stochastic differential equations that arise in models of intracellular transport. Applications of the proposed renewal reward framework are illustrated for several case studies such as mRNA transport in developing oocytes and processive cargo movement by teams of molecular motor proteins.

\keywords{renewal rewards \and intracellular transport \and processive motor transport}
 \subclass{60J20 \and  60J27 \and 92B05 \and 92C37}
\end{abstract}

\section{Introduction}
Microscale biological agents frequently change biophysical state, which results in significant changes in their movement behavior. Intracellular cargo, for example, switches among active transport, diffusive transport, and paused states, each resulting from different mechanochemical configurations of the cargo, cytoskeletal filaments, and the molecular motors that bind them \citep{hancock2014bidirectional,bressloff2013stochastic}. Models for this behavior can be either deterministic (typically partial differential equations, PDEs) or stochastic (often continuous-time Markov chains, CTMCs, or stochastic differential equations, SDEs) depending on whether the investigation focuses on population properties (deterministic methods) or individual paths (stochastic methods). Each state is commonly characterized in terms of a mean velocity, fluctuations about the mean velocity, and a distribution of time spent in the state, sometimes but not always determined by classical reaction rate theory. Explicit solutions for these models are rarely available, so asymptotic or numerical methods are often deployed to investigate and characterize the model's predictions. The study of deterministic models often relies on numerical simulation using PDE integration methods \citep{wang2003robust,cox2002exponential,trong2015cortical}, while stochastic models are simulated with Monte Carlo/Gillespie algorithms \citep{muller2008tug,kunwar2010robust,muller2010bidirectional,allard2019sliding} to generate individual trajectories that are then analyzed statistically. However, these computations can be quite costly, especially when one wants to understand how bulk transport properties (like effective velocity or diffusivity) depend on individual model parameters. When possible, asymptotic analysis allows for explicit approximation of transport properties, which can validate, complement, or even replace numerical simulations \citep{reed1990approximate,brooks1999probabilistic,pavliotis2005multiscale,pavliotis2008multiscale,popovic2011stochastic,mckinley2012asymptotic,bressloff2015stochastic,ciocanel2017analysis}.

The long-term effective velocity of state-switching particles is often straightforward to compute, usually obtained by calculating the fraction of time spent in each state and correspondingly averaging the associated state velocities. On the other hand, this weighted average technique is not valid when calculating a particle's effective diffusivity, since this quantity cannot be simply expressed in terms of the statistics of the particle transport at a single moment of time. Rather the effective diffusivity depends on temporal correlations that exist in displacements, including across changes in biophysical state. Generalizing some previous work \citep{brooks1999probabilistic,hughes2011matrix,krishnan2011renewal,hughes2012kinesins,ciocanel2017analysis}, we consider this problem of computing effective diffusivity for a class of state-switching particle models that can be expressed in a framework where the sequence of states are given by a Markov chain, but the times spent in these states are not necessarily exponentially distributed as in a continuous-time Markov chain. Since we assume that the state process Markov chain is positive recurrent, the particle position process can be described as a regenerative increment process in a sense defined by \citet{serfozo2009basics}, for example. That is to say, we consider processes that almost surely return to some base state at a sequence of (random) regeneration times such that the dynamics after a regeneration time are independent from those that occur before.  As a result, we can decompose the process into what we refer to as \textit{cycles}, in which the particle starts in a base state, undergoes one or more state changes, and then revisits the base state again. The dynamics within each cycle are independent of other cycles and we can use the renewal-reward theorem to perform asymptotic calculations by viewing the total displacement within each cycle as its reward and viewing the cycle durations as times between regenerations. An early application of the idea of computing effective particle velocity and diffusivity by decomposition and analysis of the dynamics in terms of independent cycles was to large enhancement of (non-switching) particle diffusion in a tilted periodic potential~\citep{reimann2002diffusion,reimann2001giant}.

Our primary motivating examples are related to intracellular transport. Some prominent recent investigations include the study of mRNA localization in oocyte development \citep{zimyanin2008vivo,trong2015cortical,ciocanel2017analysis}, cell polarization in the budding yeast \citep{bressloff2015stochastic}, neurofilament transport along axons \citep{jung2009modeling,li2014deciphering}, interactions of teams of molecular motor proteins \citep{klumpp2005cooperative,muller2008tug,kunwar2010robust,muller2010bidirectional}, and sliding of parallel microtubules by teams of cooperative identical motors \citep{allard2019sliding}. Microtubule-based transport of cargo is typically mediated by kinesin motors moving material to the cell periphery and by dynein motors carrying it to the nucleus. Understanding population-scale behaviors, such as protein localization, that arise from local motor interactions remains an open question. While multiple motor interactions are usually thought to be resolved through a tug-of-war framework \citep{muller2008tug}, it has been observed that important predictions made by the tug-of-war framework are not consistent with \textit{in vivo} experimental observations \citep{kunwar2011mechanical,hancock2014bidirectional}. The work presented in this paper can aid theoretical efforts to relate local motor-cargo dynamics to predictions for large scale transport.

\subsection{PDE methods for Markovian switching}

For hybrid switching diffusion processes \citep{book-yinzhu}, in which particles independently switch with continuous-time Markovian dynamics between states that have different velocities and/or diffusivities, the law of a particle can be expressed in terms of its associated forward Kolmogorov equations with an advection-reaction-diffusion structure: 
\begin{equation}\label{eq:n-state-mod}
\frac{\partial \vect{u}(y,t)}{\partial t} = A^T\vect{u} - V \partial_y \vect{u} + D \Delta \vect{u}\,.
\end{equation}
We will actually think of $\vect{u}$ as an $(N+1)$-dimensional column vector (indexed from $0$ to $N$)  of the concentrations of particle populations in different dynamical states, which also obey the forward Kolmogorov equations with a different normalization. The dynamics are governed by matrices $A, V, D \in \mathbb{R}^{(N+1) \times (N+1)}$, where $ V $ and $ D $ are diagonal matrices, with real constant diagonal entries $v_0, v_1,\ldots,v_N$  for $ V $ corresponding to the particle velocities in each state, and positive real constant diagonal entries $d_0, d_1,\ldots,d_N$ for $D $ corresponding to the diffusion coefficients  in each state. The matrix $A$ is the transition rate matrix of the associated finite state continuous-time recurrent Markov chain (CTMC), $J(t)$, which tracks the state of the particle at a given time. That is to say, each off-diagonal entry $a_{ij}$ can be interpreted as the rate at which a particle in state $i$ switches to state $j$. The diagonal entries of $A$ are non-positive and correspond to the total rate out of a given state. The rows of $A$ sum to zero. Assuming that the CTMC is irreducible, it follows that $A$ admits a zero eigenvalue with algebraic and geometric multiplicity one, and the corresponding normalized zero-eigenvector $\vect{\pi}$ is the stationary distribution of $J(t)$.  

Either quasi-steady-state reduction~\citep{bressloff2013stochastic} or homogenization theory~\citep{pavliotis2008multiscale} can be used to reduce the complexity of the advection-diffusion-reaction system~\ref{eq:n-state-mod} to a scalar advection-diffusion equation of the form:
\begin{equation}\label{eq:eff-mod}
\frac{\partial c(y,t)}{\partial t} = v_{\text{eff}} \partial_y  c(y,t)+ D_{\text{eff}} \Delta c(y,t)\,.
\end{equation}
with constant effective velocity $ v_{\text{eff}} $ and constant effective diffusivity $ D_{\text{eff}} $ for the particle concentration without regard to state $ c(y,t) = \sum_{i=0}^N u_i (y,t) $.  Quasi-steady-state reduction assumes the stochastic switching dynamics occurs on a fast scale relative to the advection-diffusion dynamics, while homogenization theory applies at sufficiently large space and time scales relative to those characterizing the dynamical scales.  These different asymptotic conditions give in general distinct results when the transport and switching rates have explicit dependence on space, but when, as in the present case, they are spatially independent, the formulas for the effective transport coefficients coincide.  (This is because the time scale of advection/diffusion is linked purely to the spatial scale, so the large spatial scale assumption of homogenization will perforce induce a time scale separation between the switching and transport dynamics, as assumed in quasi-steady-state reduction.)
The effective velocity is computed by averaging the velocity in each state, weighted by the stationary distribution of the particle states:
\begin{equation}\label{eq:ds_vel}
v_{\text{eff}} = \vect{v} \cdot\vect{\pi}\,,
\end{equation}
where $\vect{v} = (v_0,v_1,\ldots,v_N)^T$. The effective diffusivity is given, from  an equivalent long-time effective dynamical description for intracellular transport derived by Lyapunov-Schmidt reduction on the low wavenumber asymptotics of the Fourier transform of Eq.~\ref{eq:n-state-mod} in~\citet{ciocanel2017analysis,ciocanel2018modeling}, by
\begin{align}\label{eq:ds_diff}
D_{\text{eff}} &= \vect{d} \cdot\vect{\pi} - \vect{v} \cdot 
     (\overline{A^T})^{-1} 
     (\vect{v}\circ \vect{\pi} -v_{\mathrm{eff}}\vect{\pi}) \,,
\end{align}
with 
\begin{equation}
\vect{v} \circ \vect{\pi} = (v(0)\pi(0),v(1)\pi(1),\ldots,v(N)\pi(N))^T \label{eq:hadprod}
\end{equation}denoting the Hadamard product (componentwise multiplication of vectors). Here $\vect{d} = (d_0,d_1,\ldots,d_N)^T$, and $ \overline{A^T} $ is the restriction of $ A^T $ to its range $ \Ran (A^T) $ (vectors orthogonal to $ (1,1,\ldots,1)^T $).  Note that the operation involving the inverse of $ (\overline{A^T})^{-1}$ is well-defined since $ \overline{A^T}$ is a full-rank matrix mapping $ \Ran(A^T) $ to $ \Ran (A^T) $, and its inversion in Eq.~\ref{eq:ds_diff} is applied to a vector in $ \Ran (A^T) $.  We remark that the homogenization formula is often written~\citep{dc:aih,pavliotis2008multiscale} in an equivalent adjoint form to Eq.~\ref{eq:ds_diff}, with a centering of the leading vector $ \vect{v} \rightarrow \vect{v} - v_{\mathrm{eff}} (1,1,\ldots,1)^T$ that renders the formula indifferent to the choice of how to invert $ A^T $. The term $\vect{d} \cdot\vect{\pi}$ above reflects the contributions to the asymptotic diffusivity from pure diffusion, while the second term captures the interactions between the advection and reaction terms. 

Applications of quasi-steady-state reduction to biophysical systems with state switching and diffusion can be found in~\citet{newby2010quasi,newby2010random,bressloff2011quasi,bressloff2013stochastic,bressloff2015stochastic}. Homogenization of Brownian motor models was conducted in~\citet{pavliotis2005multiscale,prk:tcgss}.

\subsection{Summary of method based on regeneration cycles}
These foregoing methods \citep{pavliotis2008multiscale,ciocanel2017analysis,bressloff2013stochastic} rely on the fully Markovian structure of the dynamics, with the state-switching process in particular taking the form of a continuous-time Markov chain with exponentially distributed state durations.  In this work, we consider a generalized framework in which we require only that the sequence of states visited form a discrete-time recurrent Markov chain, but do not require exponentially distributed state durations, so the state-switching process $ J(t) $ need not be a continuous-time Markov chain. 
Moreover, we allow for more general random spatial dynamics within a state that also need not be fully Markovian. Our framework and method of analysis rather requires only a regenerative structure of the dynamics, with repeated returns to a base state, at which moments the future randomness is decoupled from the previous history.

We use renewal-reward theory and a functional central limit theorem to derive effective drift and diffusion for these more general switching diffusion systems in terms of the analysis of a single regeneration cycle. The calculation framework also results in an expression for the expected run length of cargo undergoing switching diffusion. Our approach builds on previous applications of renewal-reward processes modeling motor-stepping and chemical cycles of bead-motor assays \citep{krishnan2011renewal,hughes2011matrix,hughes2012kinesins,miles2018analysis,shtylla2015mathematical} and extends the technique to accommodate more complex models with dynamics depending on the amount of time spent in the current state, as described in \S\ref{sec:exs}. Given the renewal-reward framework, the analysis of the model reduces to computing the correlated spatial displacement and time duration of each cycle, which we study in \S\ref{sec:main}. 

We illustrate the usefulness of the probabilistic renewal-reward techniques with several case studies. In \S\ref{sec:application_intra}, we show that our method of deriving effective velocity and diffusivity agrees with predictions in \citet{ciocanel2017analysis} arising from a Liapunov-Schmidt reduction approach equivalent to homogenization for partial differential equations  describing mRNA concentrations as in \eqref{eq:n-state-mod}. In \S\ref{sec:application_tugwar}, we show that our method also agrees with previous theoretical and numerical analyses of transport properties for cargo pulled by teams of molecular motors. In the case of tug-of-war dynamics, with cargo transported by teams of opposite-directed motors, our framework provides predictions on the dependence of effective diffusivity on the ratio of stall to detachment force of the pulling motors. We also apply this method to a model accounting for increased reattachment kinetics when motors are already attached to the cargo and show that teams of opposite-directed motors have lower effective velocities but larger run lengths than teams consisting of the dominant motor only. Finally, we show that our effective diffusivity calculation agrees with stochastic simulations of sliding microtubule behavior driven by teams of bidirectional motors for a large range of load sensitivity. As the experimental data on motor interactions develops rapidly, the framework proposed may prove useful in analyzing novel models and in understanding the dependence of effective transport properties on model parameters.

\section{Mathematical Framework and Examples}
\label{sec:exs}

The type of path we have in mind in this work is displayed in Figure \ref{fig:trajectory}, a continuous, stochastic process that switches between several stereotypical behaviors. Let the real-valued process $\{X(t) : t \geq 0\}$ be the time-dependent position of a particle and let $\{J(t) : t \geq 0\}$ denote the time-dependent underlying (e.g., biophysical) state, taking values from the finite state space $S=\{0, 1, 2, \ldots, N\}$. Switches between the states take place at the random times $\{t_k : k \in \mathbb{N}\}$ and we use $\{J_k : k \in \mathbb{N}\}$ to denote the state during the $k$th time interval $[t_{k-1},t_k)$. We set $t_0=0$ and $J_1=J(0)$. We assume that the sequence of states $\{J_k : k \in \mathbb{N}\}$ forms a time-homogeneous recurrent Markov chain with zero probability to remain in the same state on successive epochs. Given the state $J_k$, the associated state duration $ t_{k}-t_{k-1}$ and spatial displacement $X(t_{k}) - X(t_{k-1})$ are conditionally independent of all other random variables in the model (but not necessarily of each other). Moreover, the conditional joint distribution of $ t_{k}-t_{k-1} $ and $ X(t_{k})-X(t_{k-1}) $ given $ J_k $ depends only on the value of $J_k$ and not separately on the index $ k $.  In other words, the dynamics of $ (J(t),X(t))$ have a statistically time-homogeneous character.  

\begin{figure}[ht!]
\centering
\includegraphics[width=0.8\textwidth]{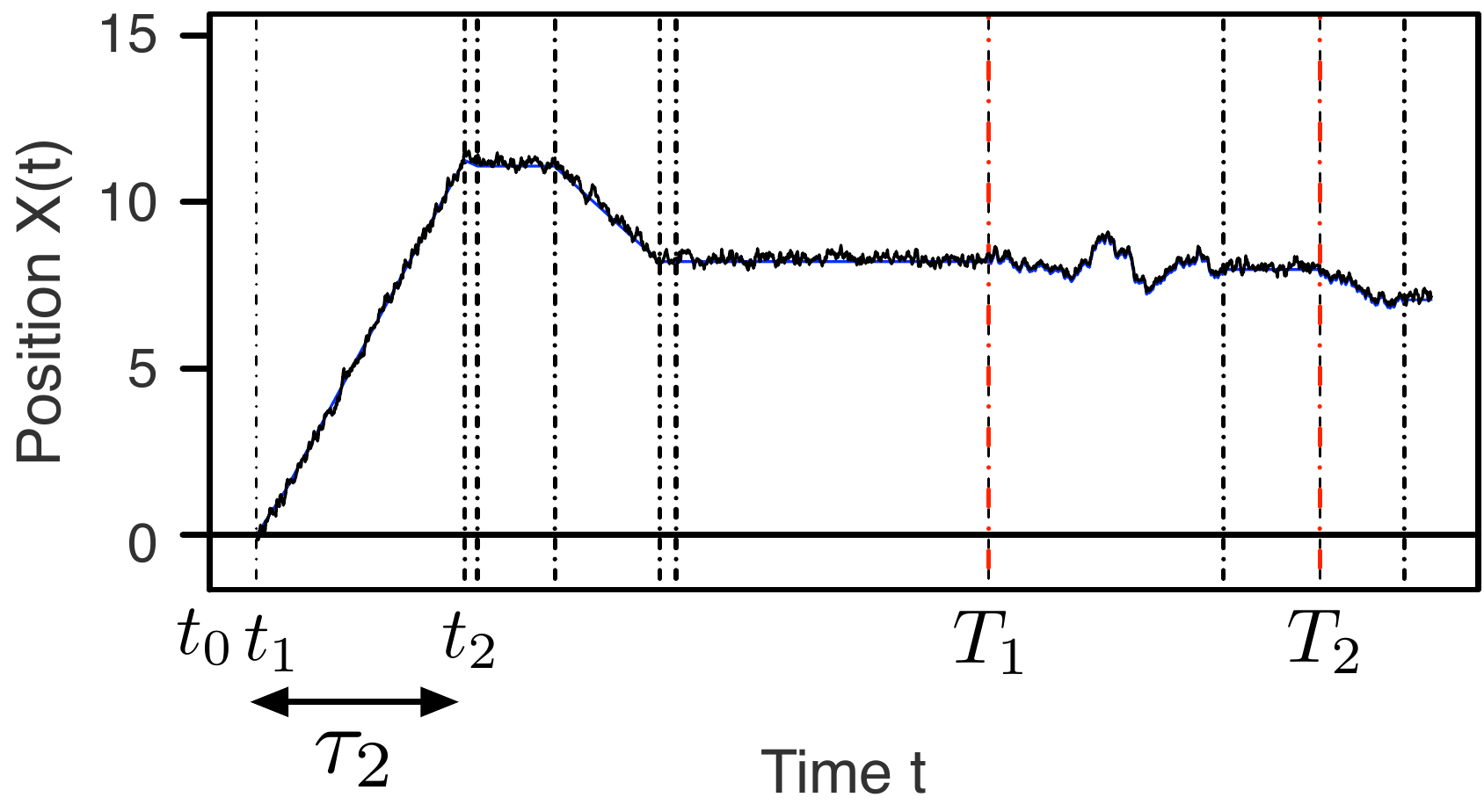} 
\caption{An example of the type of particle trajectory considered in this work. Intracellular cargo in the form of a vesicle experiences periods of active and diffusive transport. The dashed vertical lines indicate random times $\{t_k : k \in \mathbb{N}\}$ when there are switches in the biophysical state. The base ``renewal'' state is free diffusion and the red dashed vertical lines mark moments $\{T_k : k \in \mathbb{N}\}$ when the system enters the base state. We denote the times spent at each time step by $\tau_k$, as detailed in \S~\ref{sec:procedure} and \ref{sec:notation}. In the language of the paper, the red lines correspond to the regeneration times and the total spatial displacements and times between these regeneration times are the ``rewards'' and the cycle durations, respectively. \label{fig:trajectory}}
\end{figure}

One general subclass of the processes considered can be expressed as follows:  The random times $ \{t_k\}_{k=0}^{\infty} $ are generated by sampling $ t_{k}-t_{k-1} $ independently from their conditional marginal distributions given the Markov chain states $J_k$, and then conditioned upon these random variables, the spatial process $ X(t)$ is governed by a stochastic differential equation with coefficients depending on the current state, the value of $ X $ upon entry into the current state, and the time since entry into the current state.  That is, we express the conditional dynamics of $ X (t) $ as:   
\begin{equation} \label{eq:sde-example}
	\begin{aligned}
		\d X(t) = \sum_{k = 1}^\infty 1_{[t_{k-1},t_{k})} (t) \Big(\alpha_{J_k}\Big(X(t),X(t_{k-1}),t-t_{k-1}\Big) \d t + \sqrt{2 d_{J_k}} \d W(t)\Big),
	\end{aligned}
\end{equation}
where $\alpha_j : \mathbb{R}^2\times \mathbb{R}_+ \to \mathbb{R}$ is a function that describes
the drift (deterministic component of the dynamics) of a particle while in the state $j$, and $d_j$  is the diffusivity in that state. (In general, the diffusive coefficients might also depend on the position of the particle and the recent history of the process, but we restrict ourselves to memoryless, additive noise for this discussion.) In the following paragraphs, we describe a few examples.

First, we can consider the stochastic process associated with the PDE \eqref{eq:n-state-mod}, for which we set the drift terms in \eqref{eq:sde-example} to be $\alpha_j = v_j$, where $v_j$ is the constant $j$th diagonal entry of the velocity matrix $V$ in \eqref{eq:n-state-mod}. The diffusion coefficients would correspond to the entries of the diagonal matrix $D$ in \eqref{eq:n-state-mod}. The process would switch between states as a CTMC with rate matrix $A$, which, through standard theory~\citep[Sec. 3.2]{gfl:isp}, induces a transition probability matrix for the  sequence of states $ \{ J_k : k \in \mathbb{N}\}$:
\begin{equation*}
    P_{ij}= \begin{cases} \frac{A_{ij}}{\bar{\lambda}_i} & \text{ for } i \neq j, \\
    0 & \text{ for } i=j.\end{cases}
\end{equation*}
where $ \bar{\lambda}_i \equiv \sum_{j \in S \setminus \{i\}} $ is the total transition rate out of state $ i$.  The state duration in state $ i $ would be exponentially distributed with mean $ \bar{\lambda}_i^{-1} $.

For a second example, we consider the process depicted in Figure \ref{fig:trajectory}. This process could describe a model for intracellular transport, with transport states driven by processive molecular motors. There are four states: a forward processing state, a backward processing state, and a stalled state -- each of which are characterized by having a drift with speed $v_j$ plus Ornstein-Uhlenbeck type fluctuations (as described for example in \citet{smith2018assessing}) -- and a freely diffusing state where the drift term equals zero. That is, $\alpha_0 = 0$ for the freely diffusing state and for $j > 0$,
\begin{equation} \label{eq:anchored-drift}
	\alpha_j(y,y_0,t) = -\frac{\kappa}{\gamma} \big(y - (v_j t + y_0)\big)\,,
\end{equation}
where $\kappa$ is a spring constant, $\gamma$ is the viscous drag and $v_j$ is the velocity associated with the $j$th state. The term $(v_j t + y_0)$ indicates the theoretical position of a processive molecular motor that is simultaneously bound to the particle and to a microtubule. 

\begin{remark} We note that there are at least two ways that the process $X(t)$ can be considered to be non-Markovian and still fall within the set of models to which our results apply. The first, which is captured by the drift term \eqref{eq:anchored-drift}, is that the process $X(t)$ has memory in the sense that resolving $X(t)$ on any interval in $(t_k, t_{k+1})$ depends on the value $X(t_k)$. A second allowable non-Markovian dynamic can be obtained by choosing the state duration times $t_k - t_{k-1}$ given state $J_k $ to have a non-exponential distribution. As long as the stochastic process of states $\{J_k\}$ is a time-homogeneous, positive recurrent Markov chain, the technique we present will apply. 
\end{remark}

In \S\ref{sec:examples} we share a few examples from the molecular motors literature that include detailed assumptions about the set of achievable states and transitions among them. We note that these examples vary in their assumptions about fluctuations about mean behavior. In some cases, the dynamics are assumed to be ``piecewise deterministic'', similar to the class of models studied by \citet{brooks1999probabilistic} in which each state is characterized by a fixed velocity parameter $\alpha_j = v_j$ with the state diffusivity $d_j$ set to zero. 
In some of the other examples, fluctuations about the mean are included and would contribute to the long-term diffusivity as a result. Of course, fluctuations are always present in these dynamics (sometimes due to variability in the motor stepping, sometimes due to fluctuations in cargo position). There are natural ways to add these considerations to the models in \S\ref{sec:examples} and express the dynamics within the framework of equation \ref{eq:sde-example}.

\subsection{Decomposition into regenerative cycles and renewal-reward structure}\label{sec:procedure}
Here we outline our procedure for calculating the effective velocity and diffusivity of particles undergoing switching dynamics. The strategy is to break the process into independent ``cycles'' that are marked by returns to a chosen base state.  An elementary exposition of this ``dissection principle'' concept can be found in \citet[Sec. 2.5]{sr:asp}.
We define these times of re-entry into the base state as \emph{regeneration times} $\{T_n\}$. In what follows, we will view the consecutive spatial displacements and time durations of the regenerative cycles to be the rewards and cycle durations of a classical renewal-reward process \citep{cox1962renewal}. Because the cycle statistics are independent and identically distributed after the first regeneration time $ T_1$, we define (in the sense of distribution) random variables for a generic cycle $n\geq 2$:
\begin{equation} \label{eq:notation-cycle}
    \begin{aligned}
    \Delta X &\defdista X(T_{n}) - X(T_{n-1}); \qquad \Delta T \defdista T_{n}-T_{n-1}; \quad \text{ and} \\
    M &\defdista \sup_{t\in [T_{n-1},T_{n}]} |X(t) - X(T_{n-1})|.
    \end{aligned}
\end{equation}

We rely on the functional central limit theorem (FCLT) presented in \citet{serfozo2009basics} for our asymptotic results. To this end, we define the quantities
\begin{equation}
    \begin{aligned}
    \mu &:= \E(\Delta T); \quad a := \frac{\E(\Delta X)}{\E(\Delta T)}; \text{ and} \\
    \sigma^2 &:= \text{Var}(\Delta X - a \Delta T).
    \end{aligned}
\end{equation}
As in previous work on molecular motor systems \citep{hughes2011matrix,hughes2012kinesins}, the FCLT justifies defining the effective velocity and effective diffusivity of the process $X(t)$ in terms of properties of the regenerative increments as follows:
\begin{align}
    v_{\text{eff}} &:= \lim_{t \to \infty} \frac{1}{t} X(t) = a = \frac{\E(\Delta X)}{\E(\Delta T)}; \label{eq:eff_speed_serfozo} \\
    D_\text{eff} &:= \lim_{t \to \infty} \frac{1}{2t} \text{Var}(X(t)) \label{eq:eff_diffusion_serfozo} \\
    &\,= \nonumber \frac{\sigma^2}{2 \mu} = \frac{1}{2 \E(\Delta T)} \Big(\text{Var}(\Delta X) + v_{\text{eff}}^2 \text{Var}(\Delta T) - 2 v_{\text{eff}} \text{Cov}(\Delta X, \Delta T)\Big). 
\end{align}
In more technically precise terms, the FCLT states:
For $r \in \mathbb{Z}_+$, define $Y_r(t) := (X(rt) - art)/(\sigma\sqrt{r/\mu}).$ If $a, \mu, \sigma, \E(M)$, and $E((\Delta T)^2)$ are all finite, then $\lim_{r \to \infty} Y_r = B$ in distribution for $t \in [0,1]$, where $\{B : t \in [0,1]\}$ is a standard Brownian motion \citep{whitt2002stochastic}.

\subsection{Notation for events within each regeneration cycle}\label{sec:notation}

The mathematical analysis in $\S$~\ref{sec:main} focuses on calculation of the moments of the cycle duration and spatial displacement (reward) in an independent cycle of the process (introduced in $\S$~\ref{sec:procedure}). Here we introduce notation for events occurring within a single regeneration cycle. We denote the number of steps in the $n$th cycle by
\begin{displaymath}
	\eta^{(n)} := \min \{k \geq 1 : J_{k+K_{n-1}+1} = 0 \},
\end{displaymath}
where $K_0 = 0$ and $K_n = \sum_{i = 1}^n \eta^{(i)}$.  
We will let $\tau_{k}^{(n)} = t_{K_{n-1}+k}-t_{K_{n-1}+k-1}$ denote the times spent in each step of  the $n$th cycle, and $ \xi_{k}^{(n)} = X(t_{K_{n-1}+k})-X(t_{K_{n-1}+k-1}) $  denote the corresponding spatial displacements.
The total time $\Delta T$ and displacement $\Delta X$ accrued in a cycle $ n\geq 2 $ before returning to the base state is then naturally the sum of these stepwise contributions:
\begin{equation} \label{eq:T-X-definition}
    \Delta T := \sum_{k=1}^{\eta^{(n)}} \tau_k^{(n)} \text{ and } \Delta X := \sum_{k=1}^{\eta^{(n)}} \xi_k^{(n)}.
\end{equation}
In what follows, we drop the superscript denoting the index $ n $ of the cycle, since the cycles have statistically independent and identically distributed behavior for $ n \geq 2$.
We will decompose each cycle into what is accrued during the first step  ($\tau_1$ and $\xi_1$) associated with the visit to the base state, and what accrues in all subsequent steps in the cycle, which we label
\begin{equation} \label{eq:tilde-T-X-definition}
    \Delta \tilde{T} := \Delta T - \tau_1 \text{ and }  \Delta \tilde X := \Delta X - \xi_1 \,.
\end{equation}

For each state $j \in S$ of the underlying Markov chain, let $\{\tau_k(j),\xi_k (j)\}_{k = 1}^\infty$ be a sequence of iid pairs of random variables drawn from the conditional joint distribution of durations and displacements occurring during a sojourn in state $ j$. The rewards collected in each step can then be written as
\begin{displaymath}
\tau_k = \sum_{j=0}^{N}\tau_k(j) 1_{\{J_k = j\}} \text{  and  } \xi_k = \sum_{j=0}^{N}\xi_k(j) 1_{\{J_k = j\}} \,.
\end{displaymath}

In the statements of our main theorems it will be useful to have a notation for a  vector of random variables with distributions for the time durations and spatial displacements that are associated with the states $S = \{0, 1, \ldots, N\}$:
\begin{equation} \label{eq:tau-xi-definition}
\vect{\tau} \defdista (\tau(0),\tau(1),\ldots,\tau(N)) \text{ and } \vect{\xi} \defdista (\xi(0),\xi(1),\ldots,\xi(N)) \,.
\end{equation}
So, for any step number $k \in \mathbb{N}$, we have that the vector $(\tau_k(0),\tau_k(1),\ldots,\tau_k(N))$ is equal in distribution to the vector $\vect{\tau}$ and likewise for the spatial displacements.

\subsection{Examples}\label{sec:examples}

The four-state example illustrated in Figure \ref{fig:trajectory} is just one of many models for intracellular transport that is carried out by multiple molecular motors. To provide context for this framework and for our result in Section~\ref{sec:main}, Proposition~\ref{thm:stoc}, here we introduce several canonical examples from the literature where intracellular transport of cargo can be modeled as a stochastic process with regenerative increments. Often, cargo fluctuations are neglected in models when a motor-cargo complex is in a processing state \citep{muller2008tug,muller2010bidirectional,kunwar2010robust}. This is equivalent to taking a limit in which the cargo is effectively instantaneously restored by the motor-cargo tether to a fixed mechanical equilibrium configuration with respect to the motor.

\vspace{2mm}
\textit{Example 1} (2-state advection-diffusion model of particle transport).\label{ex:1}
Consider a 2-state advection-diffusion model for the dynamics of protein particles (such as mRNA) as illustrated in \citet[Figure 3A]{ciocanel2017analysis}, with a freely diffusing state and an active transport state. 
Assume that the times spent by the particles in each state are drawn from an exponential distribution
\begin{align*}
\tau(0) &\sim \text{Exp}(\beta_2)\,,\\
\tau(1) &\sim \text{Exp}(\beta_1)\,.
\end{align*} 
Here $\beta_1$ and $\beta_2$ are the transition rates between states and the notation $ \text{Exp} (r) $ denotes an exponential distribution with parameter $ r $ (equal to the inverse of the mean). 
The spatial displacement in each state is given by:
\begin{align*}
\xi(0)&= \sqrt{2D\tau(0)}Z \,,\\ 
\xi(1) &= v \tau(1)\,, 
\end{align*}
where $D$ is the diffusion coefficient in the freely diffusing state, $v$ is the speed in the active transport state, and $Z$ are independent standard normal random variables.

\vspace{2mm}
\textit{Example 2} (4-state reaction-diffusion-advection model of particle transport).\label{ex:2} More realistic representations of the dynamics of cellular protein concentrations lead to considering the more complex 4-state model illustrated in \citet[Figure 3B]{ciocanel2017analysis}, where particles may diffuse, move in opposite directions, or be paused. The state durations are exponentially distributed, with the switching rates between dynamical states provided in \citet[Figure 3B]{ciocanel2017analysis}, and the spatial displacements in each state are given by:
\begin{align*}
\xi(0) &= \sqrt{2D\tau(0)}Z  \,,\\ 
\xi(1) &= v_+ \tau(1) \,,\\ 
\xi(2) &= v_- \tau(2) \,,\\ 
\xi(3) &= 0\,,
\end{align*}
with $v_+$ the particle speed in the forward active transport state and $v_-$ the particle speed in the backward active transport state.

\vspace{2mm}
\textit{Example 3} (Cooperative models of cargo transport).\label{ex:3}
Consider the cooperative transport models proposed in \citet{klumpp2005cooperative,kunwar2010robust}, where processive motors move cargo in one direction along a one-dimensional filament. These models assume a maximum number $N$ of motor proteins, firmly bound to the cargo, that may act simultaneously in pulling the cargo in a specified direction (see \citet[Figure 1]{klumpp2005cooperative} for model visualization). The biophysical state (dynamic behavior) is defined by the  number $0 \le n \le N $ of these motors that are bound to a filament and therefore actively contributing to transport.  In a state with $n$ motors attached to a filament, the cargo moves at a velocity $v_n$, motors can unbind from the filaments with rate $\epsilon_n$ or additional motors can bind to the filaments with rate $\pi_n$. The expressions for these transport model parameters are reproduced from \citet{kunwar2010robust}, together with a nonlinear force-velocity relation: 
\begin{align}
v_n(F) & = v \left( 1-\left(\frac{F}{nF_s}\right)^w \right) \,, \label{eq:cooperative_v}  \\
\epsilon_n(F) & = n \epsilon e^{F/(nF_d)}  \,, \label{eq:cooperative_eps} \\
\pi_n & = (N-n)\pi \label{eq:cooperative_pi} \,.
\end{align}
Here $v$ is the load-free velocity of the motor, $\epsilon$ is the load-free unbinding rate, and $\pi$ is the motor binding rate. $F$ is the externally applied load force, $F_s$ is the stall force and $F_d$ is the force scale of detachment. The exponent $w$ determines the nature of the force-velocity relation considered, with $w=1$ corresponding to a linear relation, $w<1$ corresponding to a concave sub-linear force-velocity curve, and $w>1$ corresponding to a convex super-linear force-velocity curve \citep{kunwar2010robust}. The times and displacements in each state $n$ (with $0 \le n \le N$ motors bound to the filaments) are therefore given by:
\begin{align}\label{eq:tau-xi-motors}
\tau(n) &\sim \text{Exp}\big(r_{\mathrm{out}}(n)\big)\,, \nonumber\\
\xi(n) &= v_n(F) \tau(n) \,, 
\end{align}
where $r_{\mathrm{out}}(n) = \epsilon_n(F)+\pi_n$ is the transition rate out of the state with $n$ motors (see \citet[Figure 1]{klumpp2005cooperative}).

\vspace{2mm}
\textit{Example 4} (Tug-of-war models of cargo transport).\label{ex:4}
Cargoes often move bidirectionally along filaments, driven by both plus and minus-directed motors. For example, kinesin moves cargo towards the plus end of microtubules while dynein moves it towards the minus end. In \citet{muller2008tug,muller2010bidirectional}, the authors propose a model where a tug-of-war between motors drives cargo in opposite directions, with transport by several motors leading to an increase in the time the cargo remains bound to a microtubule and is pulled along a particular direction. In these models, teams of maximum $N_+$ plus- and $N_-$ minus-end motors are bound to the cargo, and the biophysical state is given by the pair of indices $ (n_+,n_-)$ with $ 0 \leq n_+ \leq N $, $ 0\leq n_- \leq N $ indicating the number of plus and minus motors bound to the filament and thereby contributing actively to the transport (see \citet[Figure 1]{muller2008tug} for model visualization). A key assumption for this model is that motors interact when bound to the filament since opposing motors generate load forces, and motors moving in the same direction share the load. In addition, they assume that motors move with the same velocity as the cargo in any state \citep{muller2008tug,muller2010bidirectional}. This model uses the following expressions for the transport parameters:
\begin{align}
v_c(n_+,n_-) & = \frac{n_+F_{s+}-n_-F_{s-}}{n_+F_{s+}/v_{f+} + n_-F_{s-}/v_{b-}} \,, \label{eq:tugwar_v} \\
\epsilon_{+}(n_+) & = n_+ \epsilon_{0+} e^{F_c/(n_+F_{d+})}  \,, \label{eq:tugwar_eps}\\
\pi_+(n_+) & = (N_+-n_+)\pi_{0+} \label{eq:tugwar_pi}\,.
\end{align}
Here indices $+$ and $-$ refer to the plus- and minus-end directed motors under consideration. The model parameters are as follows: $F_s$ is the stall force, $F_d$ is the force scale for detachment, $\epsilon_0$ is the load-free unbinding rate, $\pi_0$ is the motor binding rate, $v_f$ is the forward velocity of the motor (in its preferred direction of motion), and $v_b$ is the slow backward velocity  of the motor considered. Eq.~\ref{eq:tugwar_v} applies for the case when $n_+F_{s+}>n_-F_{s-}$ (stronger plus motors, \citet{muller2008tug}), and an equivalent expression with $v_{f+}$ replaced by $v_{b+}$ and $v_{b-}$ replaced by $v_{f-}$ holds for $n_+F_{s+}\le n_-F_{s-}$ (stronger minus motors). Equivalent expressions for the binding and unbinding rates hold for the minus-end directed motors. 
In the  case of stronger plus motors, the cargo force $F_c$ 
when pulled by $n_+$ plus and $n_-$ minus motors is given by \citep{muller2008tug}:
\begin{align}
F_c(n_+,n_-) & = \lambda n_+ F_{s+} + (1-\lambda)n_- F_{s-} \,, \nonumber \\
\lambda &= \frac{1}{1+ \frac{n_+F_{s+}v_{b-}}{n_-F_{s-}v_{f+}}}\,,
\end{align}
with equivalent expressions for stronger minus motors as described above and in \citet{muller2008tug}. The times and displacements accumulated at each time step and in each state are defined as in Eq.~\ref{eq:tau-xi-motors} in Example~3.

\section{Analysis within a single cycle}
\label{sec:main}
\vspace{2mm}
From standard renewal-reward and functional central limit theorem results, which we detailed in Section \ref{sec:exs}, we have related the computation of effective velocity and diffusivity via Eqs.~\ref{eq:eff_speed_serfozo} and \ref{eq:eff_diffusion_serfozo}
to analyzing the first and second moments and correlation of the spatial displacement and time spent in each regeneration cycle.
In this section, the main result is Proposition~\ref{thm:stoc}, which provides these statistics. We begin with Lemma~\ref{lem:recursion}, by recalling a standard recursion formula for the moments of the reward accumulated until hitting a designated absorbing state. We include the proof of this lemma for completeness and as an example of the moment generating function approach we use in Lemma~\ref{lem:momentsTX}. In Proposition~\ref{thm:stoc} we address the calculation of total displacement and time duration during the regeneration cycles described in Section~\ref{sec:exs}. 

Let $0$ be the base state that marks the beginning of a new renewal cycle. We denote the set of remaining states as $S \backslash \{0\}$, and define $\tilde{P}$ as the $N \times N$ substochastic matrix containing the probabilities of transition among these non-base states only. Generally, we use the symbol \textasciitilde ~ to refer to a vector or a matrix whose components corresponding to the base state have been removed.

Let $R$ denote the total reward accumulated until the state process hits the base state. Note that the value of $R$ will depend on what the initial state of the process is. In our motor transport examples, $R$ corresponds to the time $\Delta \tilde{T}$ or the displacement $\Delta \tilde{X}$ accumulated after stepping away from the base state and before returning to the base state. Let $\rho_k$ denote the reward accumulated at each time step, recalling that time increments are denoted $\tau_k$ and displacement increments $\xi_k$ in Section~\ref{sec:notation}. 

By introducing random variables $ \rho_k (j) $ for $ j \in S$ and $ k \in \nbb $ that indicate the reward received at step $ k $ if the particle is also in state $j $ at that step, we can use indicator variables for the state to express: $\rho_k =\displaystyle\sum_{j=1}^N \rho_k(j) 1_{\{J_k = j\}}$ and
\begin{equation}
R = \sum_{k=1}^{\eta} \rho_k = \sum_{k=1}^{\eta} \sum_{j=1}^N \rho_k(j) 
    1_{\{J_k = j\}} \,.\label{eq:rdef}
\end{equation}

In the same way that we defined the  distribution for the time durations and spatial displacements through the random vectors $\vect{\tau}$ and $\vect{\xi}$ in Eq.~\ref{eq:tau-xi-definition}, we define the distribution of generic rewards through the vector of random rewards associated to each state: 
\begin{equation}\label{eq:rhodef}
    \vect{\tilde{\rho}} = (\rho (1), \rho (2), \ldots,\rho (N))\,.
\end{equation}
The tilde notation is used here to be consistent with the connotation that tilde implies the zero state is excluded.
When we need component-wise multiplication, we use the Hadamard power notation (see Eq.~\ref{eq:hadprod}):
\begin{equation}
\vect{\tilde{\rho}}^{\circ n} = (\rho^n(1), \rho^n(2),\ldots,\rho^n(N))\,.
\label{eq:hadpower}
\end{equation}

We define the moment-generating functions of the reward collected until the state process hits the base state, and of the reward in state $i$, respectively, by the following vectors:
\begin{align}
\vect{\phi}(s): \,\, \phi_i(s) &:= \E(e^{sR} \given J_1=i)\,, \text{and} \nonumber \\
\vect{\psi}(s):
\,\, \psi_i(s) &:= \E(e^{s \rho(i)})\,. \label{eq:momentgen}
\end{align}
Characteristic functions could alternatively be used to handle rewards whose higher moments are not all finite; the results for the low order moments we calculate would be the same. Note that here and in the following, we will typically use index $i$ to refer to states $i \in S \backslash \{0\}$. In Lemma~\ref{lem:recursion}, $J_1$ the state in the initial step of the process. We seek a general recursion relation for $\E(R^n|J_1 = i)$ and denote the corresponding vector of moments for all $i \in S \backslash \{0\}$ by $\E_{S \backslash \{0\}}(R^n)$. The following result is a variation on similar recursion formulas for rewards accumulated in Markov chains~\citep{hunter2008variances,palacios2009moments}.

\begin{lemma}\label{lem:recursion}
Let $\{J_k\}_{k \geq 1}$ be a time-homogeneous, positive recurrent Markov chain with a transition probability matrix $P$ (over a finite state space $S$) that has zeroes for each of its diagonal entries. Let the reward variables $R$ and $\vect{\tilde \rho}$ be defined as in Eqs.~\ref{eq:rdef} and \ref{eq:rhodef}, respectively. For $n \in \mathbb{N}$, define the column vector
\begin{equation}
    \E_{S \setminus \{0\}}(R^n) := \big(E(R^n \given J_1 = 1), \, \ldots \, , E(R^n \given J_1 = N)\big)\,.
\end{equation}
Then this vector -- the expected reward accumulated up to the first time that the state process $\{J_k\}$ hits the base state 0 -- satisfies the recursion relation
\begin{equation} \label{eq:recursion}
\begin{aligned}
\E_{S \backslash \{0\}}(R) &= (I-\tilde{P})^{-1} \E(\vect{\tilde \rho}); \\
\E_{S \backslash \{0\}}(R^n) &= (I-\tilde{P})^{-1} \left ( \E(\vect{\tilde\rho}^{\circ n}) + \sum_{m=1}^{n-1} \binom{n}{m} \mathrm{diag}(\E(\vect{\tilde\rho}^{\circ (n-m)})) \, \tilde{P} \, \E_{S \backslash \{0\}}(R^m) \right). 
\end{aligned}
\end{equation}
Here $ \tilde{P} $ is the substochastic matrix component of $ P $ excluding the base state $0$, and $\vect{\tilde{\rho}}^{\circ n}$ is the Hadamard $n$-th power vector defined in Eq.~\ref{eq:hadpower}.
\end{lemma}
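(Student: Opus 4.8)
The plan is to run a first-transition analysis on the moment-generating functions $\vect{\phi}(s)$ of Eq.~\ref{eq:momentgen} and then recover the moment recursion by differentiating $n$ times at $s=0$, which is precisely the moment-generating-function technique flagged for Lemma~\ref{lem:momentsTX}. Conditioning on $J_1 = i \in S\setminus\{0\}$ and on the next state $J_2$, I would decompose the total reward as $R = \rho_1 + R'$, where $\rho_1$ is the reward on the first step and $R'$ is everything accumulated from step $2$ onward. Because, given $J_1 = i$, the step reward $\rho_1$ is an independent draw from the law of $\rho(i)$, it is independent both of the transition $J_1\to J_2$ and of $R'$; and by the Markov property together with the statistical time-homogeneity assumed in Section~\ref{sec:exs}, conditioned on $J_2 = j\neq 0$ the tail $R'$ is distributed as $R$ given $J_1 = j$, while conditioned on $J_2 = 0$ one has $\eta = 1$ and $R' = 0$. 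Taking expectations of $e^{sR} = e^{s\rho_1}e^{sR'}$ and using this factorization gives the scalar relation $\phi_i(s) = \psi_i(s)\big(P_{i0} + \sum_{j=1}^N P_{ij}\phi_j(s)\big)$, i.e. in vector form
\[
\vect{\phi}(s) = \diag(\vect{\psi}(s))\big(\vect{P}_0 + \tilde P\,\vect{\phi}(s)\big),
\]
where $\vect{P}_0 = (P_{10},\ldots,P_{N0})^T$ collects the one-step absorption probabilities into the base state.

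Next I would extract moments by applying the Leibniz rule to the product $\phi_i = \psi_i\cdot h_i$, where $h_i(s) := P_{i0} + \sum_{j} P_{ij}\phi_j(s)$, and evaluating at $s=0$, using $\phi_i^{(n)}(0) = \E(R^n\given J_1 = i)$ and $\psi_i^{(k)}(0) = \E(\rho(i)^k)$. Since $P$ is stochastic, $h_i(0) = P_{i0} + \sum_j P_{ij} = 1$, so the $m=0$ term of the Leibniz sum contributes precisely $\E(\rho(i)^n)$, the $i$th component of $\E(\vect{\tilde\rho}^{\circ n})$. For $m\geq 1$ one has $h_i^{(m)}(0) = \sum_j P_{ij}\E(R^m\given J_1=j) = \big(\tilde P\,\E_{S\setminus\{0\}}(R^m)\big)_i$; the $m=n$ term then yields $\big(\tilde P\,\E_{S\setminus\{0\}}(R^n)\big)_i$, and the intermediate terms $1\le m\le n-1$ assemble into $\sum_{m=1}^{n-1}\binom{n}{m}\diag(\E(\vect{\tilde\rho}^{\circ(n-m)}))\,\tilde P\,\E_{S\setminus\{0\}}(R^m)$ once written componentwise. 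Collecting these and moving the $m=n$ term to the left-hand side produces $(I-\tilde P)\,\E_{S\setminus\{0\}}(R^n)$ equal to the claimed right-hand side, from which the stated recursion follows upon left-multiplying by $(I-\tilde P)^{-1}$; the $n=1$ case has an empty intermediate sum and reduces to $\E_{S\setminus\{0\}}(R) = (I-\tilde P)^{-1}\E(\vect{\tilde\rho})$.

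Two points need care. The invertibility of $I - \tilde P$ I would justify from positive recurrence: since the finite chain $\{J_k\}$ almost surely reaches state $0$ from every state, $\tilde P$ has spectral radius strictly less than $1$, so $(I-\tilde P)^{-1} = \sum_{k\ge 0}\tilde P^k$ exists (this also guarantees $\eta < \infty$ almost surely, so $R$ is well defined). The step I expect to be the main obstacle is the rigorous justification of the moment-generating-function factorization and of differentiating it term by term at $s=0$: the regenerative identity $R'\defdista (R\given J_1=j)$ and the independence of $\rho_1$ from the future must be argued carefully from the conditional-independence structure of Section~\ref{sec:exs}, and the differentiations presuppose that the relevant $\vect{\phi}(s)$ and $\vect{\psi}(s)$ are finite and smooth near $s=0$. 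When higher moments of the step rewards are not all finite, I would replace the moment-generating functions by characteristic functions, as noted after Eq.~\ref{eq:momentgen}, which leaves the low-order moment recursion unchanged.
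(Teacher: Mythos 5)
Your proposal is correct and follows essentially the same route as the paper's own proof: a first-transition decomposition $R = \rho_1 + R'$, conditioning on $J_2$ to factor the moment-generating function into $\phi_i(s) = \psi_i(s)\big(P_{i0} + \sum_{j \in S \setminus \{0\}} P_{ij}\phi_j(s)\big)$ (the paper's Eq.~\ref{eq:theta_fg}), followed by Leibniz-rule differentiation at $s=0$ and solving the resulting linear system for $\E_{S \setminus \{0\}}(R^n)$. The only differences are organizational (you apply Leibniz once to $\psi_i \cdot h_i$ rather than to the two products separately) plus your added justification that $I - \tilde P$ is invertible via the spectral radius of $\tilde P$, a point the paper leaves implicit.
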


\begin{proof}
Let $R$, the reward accumulated until hitting the base state $0$, be decomposed into the reward from the first and from subsequent steps as follows: $R = \rho_1 + \check{R}$. We calculate the moment-generating function of $R$ conditioned on the initial state $J_1 = i$ as follows:
\begin{align}
\phi_i(s) :=&\, \E(e^{sR}\given J_1=i) \nonumber \\
		   =&\, \sum_{j \in S} \E(e^{sR} \given J_1=i,J_2=j) P_{ij} \nonumber\\
            =&\, \sum_{j \in S} \E(e^{s\rho_1} e^{s\check{R}}\given J_1=i,J_2=j) P_{ij}\nonumber \\
            =&\, \E(e^{s \rho_1}\given J_1=i) \left(\E(e^{s\check{R}}|J_2=0)P_{i0} + \sum_{j \in S \backslash \{0\}}  \E(e^{s\check{R}} \given J_2=j) P_{ij} \right) \nonumber\\
            =&\, \E(e^{s \rho(i)}) \left(P_{i0} + \sum_{j \in S \backslash \{0\}}  \E(e^{sR} \given J_1=j) P_{ij} \right) \nonumber\\
            =&\, \psi_i(s) \left(P_{i0} + \sum_{j \in S \backslash \{0\}} \phi_j(s) P_{ij} \right) \label{eq:phi_recursion},
\end{align}
where $\psi_i(s)$ is defined in Eq.~\ref{eq:momentgen}. In the fourth line we used the Markov property, and in the fifth line we used the fact that \begin{equation*}
(\check{R} \given J_2=j) \sim (R \given J_1=j)(1-\delta_{j0})
\end{equation*}
where $ \delta_{ij}$ is the Kronecker delta function. 
Defining
\begin{align*}
f_i(s) & = \psi_i(s) P_{i0}\,, \; i \in S \backslash \{0\} \,,\\
G(s) &= \{ G(s,i,j); i,j \in S \backslash \{0\}: G(s,i,j)=\psi_i(s) P_{ij}\} \,,
\end{align*}
then we can write Eq.~\ref{eq:phi_recursion} in matrix-vector form:
\begin{align}
\vect{\phi}(s) &= \vect{f}(s) + G(s) \vect{\phi}(s)\,. \label{eq:theta_fg}
\end{align}
Since the moments of the reward before hitting the base state can be calculated using
\begin{displaymath}
\E(R^n\given J_1=i) = \frac{\partial^n}{\partial s^n} \phi_i(s)|_{s=0} \,,
\end{displaymath}
we calculate the derivatives:
\begin{align*}	
\frac{\partial^n \vect{\phi}(s)}{\partial s^n} &= \frac{\partial^n \vect{f}(s) }{\partial s^n} + \sum_{m=0}^{n}\binom{n}{m} \frac{\partial^{n-m} G(s) }{\partial s^{n-m}}  \frac{\partial^m \vect{\phi}(s) }{\partial s^m}  \,.
\end{align*}
For the first moment ($n=1$), each component yields:
\begin{align*}
\frac{\partial \phi_i(s)}{\partial s} &=  P_{i0}\E(\rho_1\given J_1=i) +   \sum_{j \in S \backslash \{0\}}  P_{ij} \E(\rho_1\given J_1=i)  + \sum_{j \in S \backslash \{0\}}  P_{ij}\frac{\partial \phi_j(s)}{\partial s} \\
&= \nonumber \E(\rho(i)) +
 \sum_{j \in S \backslash \{0\}}  P_{ij}\frac{\partial \phi_j(s)}{\partial s^n} \,.
\end{align*}
Evaluating at $s=0$ for $n=1$, we have
\begin{displaymath}
E(R \given J_1 = i) = E(\rho(i)) + \sum_{j \in S \setminus\{0\}} P_{ij} E(R \given J_n = j).
\end{displaymath}
Writing in vector form and solving for $E_{S \setminus \{0\}}(R)$ yields the first part of Eq.~\ref{eq:recursion}.

For higher-order moments ($n>1$): 
\begin{align*}
\frac{\partial^n \phi_i(s)}{\partial s^n} &=  P_{i0}\E(\rho_1^n\given J_1=i) +   \sum_{j \in S \backslash \{0\}}  P_{ij} \E(\rho_1^n\given J_1=i) \\
& \qquad + \sum_{m=1}^{n-1}\binom{n}{m} \E(\rho_1^{n-m}\given J_1=i) \sum_{j \in S \backslash \{0\}}  P_{ij} \frac{\partial^m \phi_j(s)}{\partial s^m} + \sum_{j \in S \backslash \{0\}}  P_{ij}\frac{\partial^n \phi_j(s)}{\partial s^n} \\
&= \E(\rho(i)^n) +
\nonumber     \sum_{j \in S \backslash \{0\}}  P_{ij}\frac{\partial^n \phi_j(s)}{\partial s^n} \\ 
& \qquad + \sum_{m=1}^{n-1}\binom{n}{m} \E(\rho(i)^{n-m}) \sum_{j \in S \backslash \{0\}} \!\! P_{ij} \frac{\partial^m \phi_j(s)}{\partial s^m}  \,.
\end{align*}
Evaluating at $s=0$ gives the  recursion relation expressed in the second part of 
Eq.~\ref{eq:recursion}.
\end{proof}

\begin{corollary}\label{cor:moments12}
Let $\vect{\tau}$ and $\vect{\xi}$ denote the vectors of state-dependent time duration and spatial displacements as defined in Eq.~\ref{eq:tau-xi-definition}. Let $\Delta T$ and $\Delta X$ denote the total time elapsed and displacement accumulated by a state-switching particle up until its state process $\{J_k\}_{k \geq 1}$ returns to the base state 0 (see Eqs.~\ref{eq:T-X-definition}). Moreover, recall the first-step decomposition $\Delta T = \tau_1 + \Delta \tilde T$ and $\Delta X = \xi_1 + \Delta \tilde X$ (see Eqs.~\ref{eq:tilde-T-X-definition}). Suppose that the state process $\{J_k\}_{k \geq 1}$ and its associated transition probability matrix $P$ satisfy the assumptions of Lemma \ref{lem:recursion}. 
Then
\begin{align} 
\E_{S \backslash \{0\}}(\Delta \tilde{T}) &= (I-\tilde{P})^{-1} \E(\vect{\tilde\tau}) \,, \label{eq:ETtilde}\\
\E_{S \backslash \{0\}}(\Delta \tilde{X}) &= (I-\tilde{P})^{-1} \E(\vect{\tilde\xi}) \,, \label{eq:EXtilde}\\
\E_{S \backslash \{0\}}(\Delta \tilde{T}^2) &= (I-\tilde{P})^{-1} \left(\E(\vect{\tilde\tau}^{\circ 2})  + 2 \mathrm{diag}(\E(\vect{\tilde\tau}))\tilde{P} \, \E_{S \backslash \{0\}}(\Delta \tilde{T}) \right) \, \label{eq:ETtilde2}\\
\E_{S \backslash \{0\}}(\Delta \tilde{X}^2) &= (I-\tilde{P})^{-1} \left(\E(\vect{\tilde\xi}^{\circ 2})  + 2 \mathrm{diag}(\E(\vect{\tilde\xi}))\tilde{P} \E_{S \backslash \{0\}}(\Delta \tilde{X}) \right)  \label{eq:EXtilde2}\,,
\end{align}
\end{corollary}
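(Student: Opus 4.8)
The plan is to recognize Corollary~\ref{cor:moments12} as two immediate specializations of Lemma~\ref{lem:recursion}: one in which the per-step reward is the time duration $\tau_k$, and one in which it is the spatial displacement $\xi_k$. Since the lemma already carries out the full moment-generating-function argument for a generic reward $\rho_k$, essentially no new analytic work is required; the proof amounts to verifying the identification and then substituting.

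First I would establish that $\Delta \tilde{T}$ and $\Delta \tilde{X}$ are exactly the ``reward accumulated until absorption at state $0$'' studied in Lemma~\ref{lem:recursion}, for a process started in a non-base state. After the initial step spent in the base state $0$ (which contributes $\tau_1$ and $\xi_1$), the chain $\{J_k\}$ transitions to some non-base state $i$, and from that moment $\Delta \tilde{T} = \Delta T - \tau_1$ and $\Delta \tilde{X} = \Delta X - \xi_1$ record the total time and displacement accrued until the first return to $0$. By the time-homogeneity of $\{J_k\}$ and the conditional independence of each pair $(\tau_k,\xi_k)$ given $J_k$, the conditional law of $\Delta \tilde{T}$ given that the first non-base state equals $i$ coincides with the law of $R$ in Lemma~\ref{lem:recursion} conditioned on $J_1 = i$, with per-step reward $\rho_k = \tau_k$; the analogous statement holds for $\Delta \tilde{X}$ with $\rho_k = \xi_k$. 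The only bookkeeping point here is the shift of indices (cycle-local step $k+1$ playing the role of the lemma's step $k$), which is justified precisely by time-homogeneity.

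Next I would verify the hypotheses of Lemma~\ref{lem:recursion} for both choices of reward. This is immediate: both $\tau_k$ and $\xi_k$ depend in distribution only on the current state $J_k$ (so that $\vect{\tilde\tau}$ and $\vect{\tilde\xi}$ play the role of $\vect{\tilde\rho}$), and $\{J_k\}$ is by assumption the positive recurrent, zero-diagonal chain with substochastic non-base block $\tilde{P}$ required by the lemma. Substituting $\vect{\tilde\rho} = \vect{\tilde\tau}$ into the first-moment formula of Eq.~\ref{eq:recursion} then yields Eq.~\ref{eq:ETtilde}, and $\vect{\tilde\rho} = \vect{\tilde\xi}$ yields Eq.~\ref{eq:EXtilde}.

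Finally I would specialize the second-moment recursion. For $n=2$ the sum $\sum_{m=1}^{n-1}$ in Eq.~\ref{eq:recursion} collapses to the single term $m=1$ with $\binom{2}{1}=2$, giving
\[
\E_{S \backslash \{0\}}(R^2) = (I-\tilde{P})^{-1}\Big(\E(\vect{\tilde\rho}^{\circ 2}) + 2\,\mathrm{diag}\big(\E(\vect{\tilde\rho})\big)\,\tilde{P}\,\E_{S \backslash \{0\}}(R)\Big),
\]
and substituting $\vect{\tilde\rho} = \vect{\tilde\tau}$ and $\vect{\tilde\rho} = \vect{\tilde\xi}$ produces Eqs.~\ref{eq:ETtilde2} and \ref{eq:EXtilde2} respectively. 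The main point requiring care---rather than a genuine obstacle---is the first step: confirming that $\Delta \tilde{T}$ and $\Delta \tilde{X}$ truly exclude the base-state contribution and are the reward-until-$0$ for a non-base initial state, so that the recursion is applied to the correct block $\tilde{P}$ and starting vector rather than to the full chain.
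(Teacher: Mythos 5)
Your proposal is correct and follows essentially the same route as the paper, whose proof simply notes that the results follow from Lemma~\ref{lem:recursion} with $\Delta \tilde{T}$ and $\Delta \tilde{X}$ playing the role of the reward $R$. The extra detail you supply (the index shift justified by time-homogeneity, and the collapse of the binomial sum to the single $m=1$ term for $n=2$) is exactly the bookkeeping the paper leaves implicit.
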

where $\vect{\tilde \tau}$ and $\vect{\tilde \xi}$ are the vectors of time durations and spatial displacements excluding the base state. 

\begin{proof}
These results follow directly from Lemma \ref{lem:recursion}, with $\Delta \tilde T$ and $\Delta \tilde X$ respectively playing the role of the reward $R$.
\end{proof}

\begin{lemma}\label{lem:momentsTX}
Let $\vect{\tau}$, $\vect{\xi}$, $\Delta \tilde{T}$, $\Delta \tilde{X}$, $P$, and $\{J_k\}_{k \geq 1}$ be defined as in Corollary \ref{cor:moments12}. Then
\begin{align}
\E_{S \backslash \{0\}}(\Delta \tilde{T} \Delta \tilde{X}) &= (I-\tilde{P})^{-1}\left(\E(\vect{\tilde\tau} \circ \vect{\tilde\xi})  + \mathrm{diag}(\E(\vect{\tilde\xi}))\tilde{P} \E_{S \backslash \{0\}}(\Delta \tilde{T})\right. \nonumber \\
& \qquad \qquad \qquad \qquad \quad \quad + \left. \mathrm{diag}(\E(\vect{\tilde\tau}))\tilde{P} \E_{S \backslash \{0\}}(\Delta \tilde{X})  \right) \label{eq:ETXtilde} \,.
\end{align}
\end{lemma}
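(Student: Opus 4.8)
The plan is to mirror the moment-generating-function argument of Lemma~\ref{lem:recursion}, but with a \emph{bivariate} generating function that simultaneously tracks the accumulated time and displacement so that their correlation is captured. For each non-base state $i \in S \backslash \{0\}$, I would define
\[
\Phi_i(s,u) := \E\big(e^{s\Delta\tilde{T} + u\Delta\tilde{X}} \given J_1 = i\big),
\]
and record the single-step joint generating function $\psi_i(s,u) := \E(e^{s\tau(i) + u\xi(i)})$. The key structural fact, exactly as in the proof of Lemma~\ref{lem:recursion}, is that the first-step pair $(\tau(i),\xi(i))$ is independent of everything occurring after the first transition, while the post-first-step time and displacement are distributed as $(\Delta\tilde{T},\Delta\tilde{X})$ started from the new state $J_2 = j$ (and contribute nothing when $J_2 = 0$). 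Conditioning on $J_2$ and invoking the Markov property then yields the one-step relation
\[
\Phi_i(s,u) = \psi_i(s,u)\Big(P_{i0} + \sum_{j \in S \backslash \{0\}} \Phi_j(s,u)\,P_{ij}\Big).
\]

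Next I would extract the cross-moment via the mixed partial derivative, using $\E(\Delta\tilde{T}\Delta\tilde{X}\given J_1=i) = \partial_s\partial_u\Phi_i|_{s=u=0}$. Writing $\Phi_i = \psi_i H_i$ with $H_i(s,u) = P_{i0} + \sum_{j} \Phi_j P_{ij}$ and expanding by the product rule, the mixed derivative produces four terms. At $s=u=0$ one has the normalizations $\psi_i = 1$ and $H_i = 1$ (the latter because each $\Phi_j(0,0)=1$ and the rows of $P$ sum to one), together with $\partial_s\psi_i|_0 = \E(\tau(i))$, $\partial_u\psi_i|_0 = \E(\xi(i))$, and $\partial_s\partial_u\psi_i|_0 = \E(\tau(i)\xi(i))$, while the corresponding derivatives of $H_i$ reproduce $\E_{S\backslash\{0\}}(\Delta\tilde{T})$, $\E_{S\backslash\{0\}}(\Delta\tilde{X})$, and $\E_{S\backslash\{0\}}(\Delta\tilde{T}\Delta\tilde{X})$ after multiplication by $\tilde{P}$. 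Assembling the four contributions componentwise and stacking over $i$ gives
\[
\E_{S\backslash\{0\}}(\Delta\tilde{T}\Delta\tilde{X}) = \E(\vect{\tilde\tau}\circ\vect{\tilde\xi}) + \diag(\E(\vect{\tilde\xi}))\tilde{P}\,\E_{S\backslash\{0\}}(\Delta\tilde{T}) + \diag(\E(\vect{\tilde\tau}))\tilde{P}\,\E_{S\backslash\{0\}}(\Delta\tilde{X}) + \tilde{P}\,\E_{S\backslash\{0\}}(\Delta\tilde{T}\Delta\tilde{X}).
\]
Solving this linear fixed-point relation for the cross-moment vector and inverting $I - \tilde{P}$---nonsingular for the same reason as in Lemma~\ref{lem:recursion}, since positive recurrence forces the spectral radius of the substochastic matrix $\tilde{P}$ below one---yields exactly Eq.~\ref{eq:ETXtilde}.

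The step I expect to be the main obstacle is the careful bookkeeping of the first-step term: one must use the genuinely \emph{joint} single-step generating function $\psi_i(s,u)$ rather than treating time and displacement separately, because $\tau(i)$ and $\xi(i)$ are only conditionally independent of the rest of the trajectory, not of each other. It is precisely the mixed second derivative of $\psi_i$ that contributes the $\E(\vect{\tilde\tau}\circ\vect{\tilde\xi})$ term, and neglecting this correlation would produce an incorrect formula. The remaining manipulations---expanding the product rule, applying the row-sum and normalization identities at $s=u=0$, and recognizing the vectorized forms---are routine once the one-step relation and the correct interpretation of each limiting derivative are in place, with the first-moment vectors $\E_{S\backslash\{0\}}(\Delta\tilde{T})$ and $\E_{S\backslash\{0\}}(\Delta\tilde{X})$ supplied by Corollary~\ref{cor:moments12}.
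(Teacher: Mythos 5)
Your proposal is correct and takes essentially the same approach as the paper's proof: the same bivariate moment-generating function with joint single-step transform $\psi_i(s,r)=\E(e^{s\xi(i)}e^{r\tau(i)})$, the same one-step conditioning to obtain the multiplicative recursion $\phi_i = \psi_i\big(P_{i0} + \sum_{j\in S\backslash\{0\}}\phi_j P_{ij}\big)$, the same extraction of the cross-moment via the mixed partial derivative at the origin, and the same solution of the resulting linear fixed-point system by inverting $I-\tilde{P}$. (The paper additionally remarks that a polarization identity applied to Lemma~\ref{lem:recursion} with rewards $\Delta\tilde{X}\pm\Delta\tilde{T}$ gives an alternative derivation, but its main proof is precisely yours.)
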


\begin{proof}
We use an argument similar to the re-arrangement of the moment-generating function in Eq.~\ref{eq:theta_fg} in the proof of Lemma~\ref{lem:recursion}. Here we decompose the time and displacement into the first step after the base state and the subsequent steps: $\Delta \tilde{T} = \tau_2 + \check{T}$ and $\Delta \tilde{X} = \xi_2 + \check{X}$.
Since we are interested in the cross-moment of the duration and displacement, we consider the following moment-generating function:
\begin{align}
\phi_{i}(s,r) &= \E(e^{s \Delta \tilde{X}}e^{r \Delta \tilde{T}}|J_1 = 0, J_2 = i) \nonumber \\
		&= \sum_{j \in S} \E(e^{s \Delta \tilde{X}} e^{r \Delta \tilde{T}}|J_1=0,J_2 = i, J_3 = j) P_{ij} \nonumber \\
		&= \sum_{j \in S} \E(e^{s\xi_2}e^{r\tau_2} e^{s \check{X}}e^{r \check{T}}|J_2=i,J_3=j) P_{ij} \nonumber \\
		&= \E(e^{s\xi_2}e^{r\tau_2}|J_2=i) \sum_{j \in S}  \E(e^{s \check{X}} e^{r \check{T}}|J_2=i,J_3=j) P_{ij} \nonumber \\
            &= \E(e^{s\xi(i)}e^{r\tau(i)}) \sum_{j \in S}  \E(e^{s \check{X}} e^{r \check{T}}|J_3=j) P_{ij} \nonumber \\
            &= \psi_i(s,r) P_{i0} + \psi_i(s,r) \sum_{j \in  S \backslash \{0\}} \phi_j(s,r) P_{ij}  \label{eq:phi_s_r}\,,
\end{align}
where $\psi_i(r,s) = \E(e^{s\xi(i)}e^{r\tau(i)})$.

For the calculation of the cross-term $\E_{S \backslash \{0\}}(\Delta \tilde{T} \Delta \tilde{X})$, we note that $\frac{\partial^2 \phi_i}{\partial r \partial s}|_{s=r=0} = \E(\Delta \tilde{T} \Delta \tilde{X}|J_2=i)$ and calculate:
\begin{align}
\frac{\partial^2 \phi_i}{\partial r \partial s}&= \frac{\partial^2 }{\partial r \partial s} \left( \psi_i(r,s) \sum_{j \in S \backslash \{0\}} \phi_j(r,s) P_{ij} + \psi_i(r,s)  P_{i0} \right) \nonumber \\
		&= \frac{\partial}{\partial r} \left( \frac{\partial \psi_i(r,s)}{\partial s} \sum_{j \in S \backslash \{0\}} \phi_j(r,s) P_{ij} + \psi_i(r,s) \sum_{j \in S \backslash \{0\}} \frac{\partial \phi_j(r,s)}{\partial s} P_{ij}  + \frac{\partial \psi_i(r,s) }{\partial s} P_{i0} \right) \nonumber \\
		&= \frac{\partial^2 \psi_i(r,s)}{\partial s \partial r} \sum_{j \in S \backslash \{0\}} \phi_j(r,s) P_{ij} + \frac{\partial \psi_i(r,s)}{\partial s} \sum_{j \in S \backslash \{0\}} \frac{ \partial \phi_j(r,s)}{\partial r} P_{ij} \nonumber \\
		& \qquad + \frac{\partial \psi_i(r,s)}{\partial r} \sum_{j \in S \backslash \{0\}} \frac{\partial \phi_j(r,s)}{\partial s} P_{ij}  + \psi_i(r,s) \sum_{j \in S \backslash \{0\}} \frac{\partial^2 \phi_j(r,s)}{\partial s \partial r} P_{ij} + \frac{\partial^2 \psi_i(r,s) }{\partial s \partial r} P_{i0} \nonumber\,.
\end{align}
Evaluating the above at $s=r=0$ yields:
\begin{align}
\E(\Delta \tilde{T} \Delta \tilde{X}|J_2=i) &= \E(\tau_2 \xi_2|J_2 = i) + \E(\xi_2|J_2 = i) \sum_{j \in S \backslash \{0\}} P_{ij} E(\Delta \tilde{T}|J_2=j)  \nonumber \\
& \qquad \qquad + \E(\tau_2|J_2 = i) \sum_{j \in S \backslash \{0\}} P_{ij} \E(\Delta \tilde{X}|J_2=j) \nonumber\\
& \qquad \qquad + \sum_{j \in S \backslash \{0\}} P_{ij} \E(\Delta \tilde{T} \Delta \tilde{X}|J_2=j) \nonumber \,.
\end{align}
Therefore,
\begin{align}
\E_{S \backslash \{0\}}(\Delta \tilde{T} \Delta \tilde{X}) &= \E(\vect{\tilde\tau} \circ \vect{\tilde\xi})  + \mathrm{diag}(\E(\vect{\tilde\xi}))\tilde{P} \E_{S \backslash \{0\}}(\Delta \tilde{T}) \nonumber \\
& \qquad + \mathrm{diag}(\E(\vect{\tilde\tau}))\tilde{P} \E_{S \backslash \{0\}}(\Delta X) + \tilde{P}\E_{S \backslash \{0\}}(\Delta T \Delta X) \,, \nonumber
\end{align}
which yields equation \ref{eq:ETXtilde}.

\remark{An alternative derivation of equation~\ref{eq:ETXtilde} would be to use a polarization argument for the expectation of the product:
\begin{equation}
\E_{S \backslash \{0\}}(\Delta \tilde{T} \Delta \tilde{X})  = \frac{1}{4} \left(\E_{S \backslash \{0\}}((\Delta \tilde X + \Delta \tilde T)^2) - \E_{S \backslash \{0\}}((\Delta \tilde X- \Delta \tilde T)^2)\right)\,.
\end{equation} 
In this approach, the moment-generating function depending on both cycle time $\Delta \tilde T$ and cycle displacement $\Delta \tilde X$ introduced in Eq.~\ref{eq:phi_s_r} is not required, since Lemma~\ref{lem:recursion} can be directly applied to give explicit formulas for the second moments of the reward $R=\Delta \tilde{X} + \Delta \tilde{T}$ and $R=\Delta \tilde{X} - \Delta \tilde{T}$.
}

\end{proof}

We proceed to Proposition~\ref{thm:stoc}, which provides the quantities necessary to compute the effective velocity and diffusivity of the cargo dynamics using classical theory (see Eqs.~\ref{eq:eff_speed_serfozo} and \ref{eq:eff_diffusion_serfozo} and the procedure in \S~\ref{sec:procedure}).

\begin{proposition}[First and second order statistics of rewards in a renewal cycle]\label{thm:stoc} 

Consider a regenerative cycle of a discrete-time time-homogeneous recurrent Markov chain which takes its values in the discrete state space $S=\{0,1,2,\ldots,N\}$ with probability transition matrix $P $ with zero diagonal entries, starting at base state $ 0 $ until its first return to base state $0$. The associated time $ \Delta T$ and spatial displacement $ \Delta X $ are defined as in Eq.~\ref{eq:T-X-definition}. The random variables $\tau(0)$ and $\xi(0)$ have the distributions of the time duration and spatial displacement that are accumulated in the base state, and $\vect{p}^{(1)}$ is the vector of transition probabilities from the base state in the first step of a cycle, i.e. the first row of $P$. 
The moments of the cycle time and displacement rewards are then given by:

\begin{align} 
\E(\Delta T) &= \E(\tau(0)) + \vect{p}^{(1)} \cdot \E_{S \backslash \{0\}}(\Delta \tilde{T}) \,, \nonumber\\
\E(\Delta X) &= \E(\xi(0)) + \vect{p}^{(1)} \cdot \E_{S \backslash \{0\}}(\Delta \tilde{X}) \,, \nonumber \\
\mathrm{Var}(\Delta T) &= \mathrm{Var}(\tau(0)) +  \vect{p}^{(1)} \cdot \E_{S \backslash \{0\}}(\Delta \tilde{T}^2) - (\vect{p}^{(1)} \cdot \E_{S \backslash \{0\}}(\Delta \tilde{T}))^2\,, \nonumber \\
\mathrm{Var}(\Delta X) &= \mathrm{Var}(\xi(0)) +  \vect{p}^{(1)} \cdot \E_{S \backslash \{0\}}(\Delta \tilde{X}^2) - (\vect{p}^{(1)} \cdot \E_{S \backslash \{0\}}(\Delta \tilde{X}))^2\,, \nonumber \\
\mathrm{Cov}(\Delta X,\Delta T) &= \mathrm{Cov}(\tau(0),\xi(0)) + \vect{p}^{(1)} \cdot\E_{S \backslash \{0\}}(\Delta \tilde{T} \Delta \tilde{X}) \nonumber \\ 
& \qquad \qquad - (\vect{p}^{(1)} \cdot \E_{S \backslash \{0\}}(\Delta \tilde{T}))(\vect{p}^{(1)} \cdot \E_{S \backslash \{0\}}(\Delta \tilde{X}))  \label{eq:momts}\,,
\end{align}
where the first, second, and cross-moments of the time $\Delta \tilde T$ and the displacement $\Delta \tilde X$ are given by Eqs.~\ref{eq:ETtilde}, \ref{eq:EXtilde}, \ref{eq:ETtilde2}, \ref{eq:EXtilde2}, \ref{eq:ETXtilde} in Corollary~\ref{cor:moments12} and Lemma~\ref{lem:momentsTX}. 
\end{proposition}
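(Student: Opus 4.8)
The plan is to exploit the first-step decomposition $\Delta T = \tau(0) + \Delta \tilde{T}$ and $\Delta X = \xi(0) + \Delta \tilde{X}$ from Eq.~\ref{eq:tilde-T-X-definition} (using $\tau_1 \stackrel{d}{=} \tau(0)$ and $\xi_1 \stackrel{d}{=} \xi(0)$, since $J_1 = 0$ throughout a cycle), combined with a conditioning on the first transition out of the base state, so that the moment formulas already established for $\Delta \tilde{T}$ and $\Delta \tilde{X}$ in Corollary~\ref{cor:moments12} and Lemma~\ref{lem:momentsTX} can be substituted directly. First I would record the two structural facts that drive the whole computation. One is that the base-state pair $(\tau(0),\xi(0))$ is independent of the subsequent pair $(\Delta \tilde{T},\Delta \tilde{X})$: by the model assumptions the duration and displacement collected during a sojourn depend only on the current state and are conditionally independent of all other randomness, and the state entered after the base state is governed by $P$ independently of $(\tau(0),\xi(0))$, so everything accrued after leaving state $0$ is independent of what was accrued in it. The other is that, conditioned on the first transition landing in state $i \in S\setminus\{0\}$ (an event of probability $p^{(1)}_i$, the $i$th entry of $\vect{p}^{(1)}$), the conditional moments of $\Delta \tilde{T}$ and $\Delta \tilde{X}$ are precisely the $i$th components of the vectors $\E_{S\setminus\{0\}}(\Delta \tilde{T})$, $\E_{S\setminus\{0\}}(\Delta \tilde{T}^2)$, $\E_{S\setminus\{0\}}(\Delta \tilde{T}\,\Delta \tilde{X})$, and so on.

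Given these two facts, the means follow at once from linearity of expectation and the law of total expectation: $\E(\Delta T) = \E(\tau(0)) + \E(\Delta \tilde{T})$ with $\E(\Delta \tilde{T}) = \sum_{i} p^{(1)}_i\,\E(\Delta \tilde{T}\mid \text{enter } i) = \vect{p}^{(1)}\cdot \E_{S\setminus\{0\}}(\Delta \tilde{T})$, and symmetrically for $\Delta X$. For the variances I would use independence to split $\Var(\Delta T) = \Var(\tau(0)) + \Var(\Delta \tilde{T})$, then write $\Var(\Delta \tilde{T}) = \E(\Delta \tilde{T}^2) - (\E\,\Delta \tilde{T})^2$ and evaluate the second moment by the same conditioning, $\E(\Delta \tilde{T}^2) = \vect{p}^{(1)}\cdot \E_{S\setminus\{0\}}(\Delta \tilde{T}^2)$. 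For the covariance, independence annihilates the two cross terms in $\Cov(\xi(0)+\Delta\tilde X,\ \tau(0)+\Delta\tilde T)$, leaving $\Cov(\Delta X,\Delta T) = \Cov(\xi(0),\tau(0)) + \Cov(\Delta \tilde{X},\Delta \tilde{T})$, and $\Cov(\Delta \tilde{X},\Delta \tilde{T}) = \E(\Delta \tilde{X}\,\Delta \tilde{T}) - \E(\Delta \tilde{X})\E(\Delta \tilde{T})$ is assembled from Lemma~\ref{lem:momentsTX} and the mean vectors in the same manner. Collecting terms reproduces each line of Eq.~\ref{eq:momts}.

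The step I expect to require the most care is the variance and covariance bookkeeping, not the means. The subtle point is that one must \emph{not} average conditional variances over the first transition: the vector $\E_{S\setminus\{0\}}(\Delta \tilde{T}^2)$ holds conditional \emph{second moments}, and averaging it against $\vect{p}^{(1)}$ yields the unconditional second moment $\E(\Delta \tilde{T}^2)$, from which one subtracts the square of the unconditional mean $(\vect{p}^{(1)}\cdot \E_{S\setminus\{0\}}(\Delta \tilde{T}))^2$. This subtraction automatically encodes the between-state contribution to the variance (the variance of the conditional mean across first transitions) that the law of total variance would otherwise make explicit, and getting the placement of this squared term right is exactly what produces the $-(\vect{p}^{(1)}\cdot\E_{S\setminus\{0\}}(\Delta\tilde T))^2$ pieces in the stated formulas. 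A secondary point worth stating carefully is the independence claim underlying both the variance split and the vanishing of the covariance cross terms, since this is where the regenerative, conditional-independence structure of the model is genuinely invoked.
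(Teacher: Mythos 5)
Your proof is correct and follows essentially the same route as the paper: the same first-step decomposition $\Delta T = \tau(0) + \Delta\tilde{T}$, $\Delta X = \xi(0) + \Delta\tilde{X}$, the same conditioning on the state $J_2$ entered after the base state, and the same invocation of Corollary~\ref{cor:moments12} and Lemma~\ref{lem:momentsTX}, with the independence of the base-state pair $(\tau(0),\xi(0))$ from $(\Delta\tilde{T},\Delta\tilde{X})$ playing the same load-bearing role in both arguments. The only cosmetic differences are that the paper organizes the variance calculation via the law of total variance (arithmetically equivalent to your independence-split plus unconditional second moments, as you yourself observe) and obtains the covariance line by polarization from the two variance formulas, whereas you expand the covariance bilinearly and use the cross-moment formula of Lemma~\ref{lem:momentsTX} directly.
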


\begin{proof}
With state $0$ as base state, we decompose the cycle time into the time spent in the base state $\tau_1 = \tau(0)$ and the time $\Delta \tilde{T}$ spent from leaving the base state until returning to the base state. Therefore, the total time in a cycle is given by $\Delta T = \tau_1 + \Delta \tilde{T}$, and similarly the total spatial displacement in a cycle is $\Delta X = \xi_1 + \Delta \tilde{X}$. We apply the law of total expectation by conditioning on the state $J_2$ that the process visits after the base state:
\begin{align}
\E(\Delta T) &= \E(\E(\Delta T|J_2))\nonumber \\
&= \sum_{i \in S \setminus \{0\}} \E(\Delta T|J_2 = i)P_{0i} \nonumber \\
&= \sum_{i \in S \setminus \{0\}} \E(\tau_1 + \Delta \tilde T|J_2 = i)P_{0i} \nonumber \\
&= \E (\tau(0)) + \sum_{i \in S \setminus \{0\}} \E(\Delta \tilde T|J_2 = i)P_{0i} \nonumber\\
&= \E (\tau(0)) + \vect{p}^{(1)} \cdot \E_{S \setminus \{0\}} (\Delta \tilde T) \label{eq:e0t} \,,
\end{align}
where as before $S \backslash \{0\}$ is the set of transient states and $P_{0i}$ is the probability of switching from base state $0$ to state $i$. A similar calculation applies to the first moment of the cycle reward $\E(\Delta X)$. For the second moments, we use the law of total variance as follows:
\begin{align}
\Var(\Delta T) &= \E(\Var(\Delta T|J_2)) + \Var(\E(\Delta T|J_2)) \nonumber \\
&= \E(\Var(\tau_1 + \Delta \tilde T|J_2)) + \Var(\E(\tau_1 + \Delta \tilde T|J_2)) \nonumber \\
&= \E(\Var(\tau(0))  + \Var( \Delta \tilde T|J_2)) + \Var(\E(\tau(0)) + \E(\Delta \tilde T|J_2)) \nonumber \\
&= \Var(\tau(0)) + \E(\Var( \Delta \tilde T|J_2)) + \Var(\E(\Delta \tilde T|J_2)) \nonumber \\
&= \Var(\tau(0)) + \sum_{i \in S\setminus \{0\}} \Var( \Delta \tilde T|J_2=i) P_{0i} \nonumber \\
&\qquad \qquad + \sum_{i \in S\setminus \{0\}} (\E(\Delta \tilde T|J_2=i))^2 P_{0i} - \left(\sum_{i \in S\setminus \{0\}} \E(\Delta \tilde T|J_2=i)P_{0i} \right)^2 \nonumber \\
&= \Var(\tau(0)) + \sum_{i \in S\setminus \{0\}} \E( \Delta \tilde T^2|J_2=i) P_{0i} - \left(\sum_{i \in S\setminus \{0\}} \E(\Delta \tilde T|J_2=i)P_{0i} \right)^2 \nonumber \\
&= \Var(\tau(0)) + \vect{p}^{(1)} \cdot \E_{S \setminus \{0\}}(\Delta \tilde T^2) -(\vect{p}^{(1)} \cdot \E_{S \setminus \{0\}}(\Delta \tilde T))^2
\label{eq:e0t2} \,,
\end{align}
and similarly for $\Var(\Delta X)$. The covariance term can then be obtained via the polarization formula from the formulas for the variances.

\end{proof}

\section{Application to models of intracellular transport}
\label{sec:application_intra}

Proposition~\ref{thm:stoc} and the calculation procedure in \S~\ref{sec:procedure} can be applied to understand the long-term dynamics of protein intracellular transport described in \S~\ref{sec:exs} in Examples 1 and 2. The effective velocity and diffusivity of proteins are key in understanding large timescale processes such as mRNA localization in frog oocytes \citep{ciocanel2017analysis} and cell polarization in the budding yeast \citep{bressloff2015stochastic}.

\subsection{2-state advection-diffusion model of particle transport}

In the following, we consider the 2-state transport-diffusion model for the dynamics of mRNA particles described in Example 1 and illustrated in \citet[Figure 3A]{ciocanel2017analysis}. We show how the calculations in Proposition~\ref{thm:stoc} can be applied to determine the large-time effective velocity and diffusivity of the particles. 

In the 2-state model, the probability transition matrix is simply $P = \begin{pmatrix}
0 & 1\\
1 & 0
\end{pmatrix}\,.$ We take the diffusing state as the base state. The substochastic matrix of the probabilities of transition between the other states~\citep{bhat2002elements} is then simply the scalar $\tilde{P}  = 0$ in this case, while the vector of transition out of the base state is simply $ \vect{p}^{(1)} = [1].$  

The first and second moments of the cycle duration are given by Eqs~\ref{eq:ETtilde} and \ref{eq:ETtilde2} with $(I-\tilde{P} )^{-1} = 1$. Similarly, the moments of the spatial displacement are given by Eqs.~\ref{eq:EXtilde} and \ref{eq:EXtilde2}. In this model, we have that $S\backslash \{0\}=\{1\}$ and $\tilde\tau_k^{\circ n} = \tau_k^n(1)$ for the time reward and $\xi_k^{\circ n} = \xi_k^n(1)$ for the spatial displacement reward  in the active transport state.
In the 2-state system, these values are simply scalars:
\begin{displaymath}
\begin{aligned}
\E_1(\Delta \tilde{T})  &= \E(\tau(1)) = 1/\beta_1 \,, \\
\E_1(\Delta \tilde{X}) &= \E(\xi(1)) = v/\beta_1  \,,\\
\E_1(\Delta \tilde{T} \Delta \tilde{X}) &= \E(\tau(1) \xi(1)) = 2v/\beta_1^2 \,,\\
\E_1(\Delta \tilde{T}^2) &= \E(\tau(1)^2) = 2/\beta_1^2\,,\\
\E_1(\Delta \tilde{X}^2) &= \E(\xi(1)^2) = 2v^2/\beta_1^2 \,.
\end{aligned} 
\end{displaymath}
The statistics of the cycle are therefore given by:
\begin{displaymath}
\begin{aligned}
\E(\Delta T) &=\E(\tau(0)) + \E_1(\Delta \tilde{T}) \vect{p}^{(1)}(1) = \frac{1}{\beta_2} +  \frac{1}{\beta_1} \,, \label{eq:et}\\
\E(\Delta X) &=\E(\xi(0)) + \E_1(\Delta \tilde{X}) \vect{p}^{(1)}(1) = 0 + v/\beta_1 = \frac{v}{\beta_1}  \,,\\
\text{Var}(\Delta T) &= \text{Var}(\tau(0)) + \E_1(\Delta \tilde{T}^2) \vect{p}^{(1)}(1) - (\E_1(\Delta \tilde{T}) \vect{p}^{(1)}(1))^2 = \frac{1}{\beta_2^2} + \frac{1}{\beta_1^2}\,,\\
\text{Var}(\Delta X) &= \text{Var}(\xi(0)) + \E_1(\Delta \tilde{X}^2) \vect{p}^{(1)}(1) - (\E_1(\Delta \tilde{X}) \vect{p}^{(1)}(1))^2 = \frac{2D}{\beta_2} + \frac{v^2}{\beta_1^2}  \,,\\
\text{Cov}(\Delta T,\Delta X) &= \text{Cov}(\tau(0),\xi(0)) + \E_1(\Delta \tilde{T} \Delta \tilde{X}) \vect{p}^{(1)}(1) \\
& \qquad \qquad - \left(\E_1(\Delta \tilde{T}) \vect{p}^{(1)}(1)\right) \left(\E_1(\Delta \tilde{X}) \vect{p}^{(1)}(1)\right)= \frac{v}{\beta_1^2}\,.
\end{aligned}
\end{displaymath}
Eqs.~\ref{eq:eff_speed_serfozo} and \ref{eq:eff_diffusion_serfozo} then provide expressions for the effective velocity and diffusivity of the particles as in \citet{hughes2011matrix,hughes2012kinesins,whitt2002stochastic}:
\begin{align}
v_{\mathrm{eff}} & = \frac{\E(\Delta X)}{\E(\Delta T)} = v \frac{\beta_2}{\beta_1+\beta_2}\,, \nonumber \\
D_{\mathrm{eff}} & = \frac{1}{2 \E(\Delta T)} (v_{\mathrm{eff}}^2 \text{Var}(\Delta T) + \text{Var}(\Delta X) - 2v_{\mathrm{eff}}\text{cov}(\Delta T,\Delta X)) \nonumber \\
&= D\frac{\beta_1}{\beta_1+\beta_2} + v^2 \frac{\beta_1 \beta_2}{(\beta_1+\beta_2)^3} \nonumber \,.
\end{align}

Note that the effective velocity is given by the speed in the transport state multiplied by the fraction of time the mRNA particles spend in the moving state. The effective diffusivity has a more complicated expression, but clearly shows the dependence of this quantity on each model parameter. These expressions agree with the results of Eqs.~\ref{eq:ds_vel}, \ref{eq:ds_diff} as outlined in \citet{ciocanel2017}.

\subsection{4-state advection-reaction-diffusion model of particle transport}
Our calculation procedure and Proposition~\ref{thm:stoc} extend to more complicated and realistic models such as the 4-state model described in Example 2 and illustrated in \citet[Figure 3B]{ciocanel2017analysis}. By considering the stochastic transitions between dynamic states and the durations and displacements accumulated in each state, the effective velocity and diffusion of cargo can be calculated in an intuitive way even for such complex models with many transition states. Since this approach requires calculating the inverse of the invertible matrix $I-\tilde{P}$ (see \citet{bhat2002elements,dobrow2016introduction}) to determine the fundamental matrix, the approach presented here is easily implemented in a software package such as \textit{Mathematica} or \textit{Matlab} for symbolic derivation of the effective transport properties for models with multiple states (see sample code in the repository on \cite{Github_effdiff2019}). 

In Figure~\ref{fig:comparison_4state}, we illustrate the good agreement of the results in \citet{ciocanel2017analysis} with our calculation procedure in \S~\ref{sec:procedure} (Proposition~\ref{thm:stoc} combined with Eqs.~\ref{eq:eff_speed_serfozo} and \ref{eq:eff_diffusion_serfozo}) based on 15 sets of parameters estimated in \citet{ciocanel2017analysis}. In addition, we validate results from both approaches by carrying out numerical simulations of the particle transport process and empirically estimating the effective transport properties. In particular, we set up a Markov chain of the 4-state model in \citet[Figure 3B]{ciocanel2017analysis}. For each parameter set, we consider $N_R = 500$ stochastic realizations of the dynamics and for each iteration, we run the process until a fixed large time $T_f = 5\times10^4$, which also keeps the computation feasible. We then estimate the effective velocity and diffusivity as follows:
\begin{align}
v_{\mathrm{eff}} &\approx \frac{(\sum_{i=1}^{N_R} X_i(T_f))/N_R} {T_f}\,,\\
D_{\mathrm{eff}} &\approx \frac{\left(\sum_{i=1}^{N_R}(X_i(T_f) - (\sum_{i=1}^{N_R}X_i(T_f))/N_R)^2\right)/(N_R-1)}{2T_f}\,,
\end{align}
where $X_i(T_f)$ are the simulated final positions of the particle at time $T_f$ in iteration $i$. 

The different parameter sets (labeled by index) in Figure~\ref{fig:comparison_4state} correspond to simulations using parameter estimates based on FRAP mRNA data from different frog oocytes in \citet{gagnon2013directional,ciocanel2017analysis}. The good agreement of the theoretical and simulated effective velocity and diffusivity shows that the analytical approach proposed is a good alternative to potentially costly simulations of the stochastic process up to a large time. 

\begin{figure}[t]
\centering
  \includegraphics[height=45mm]{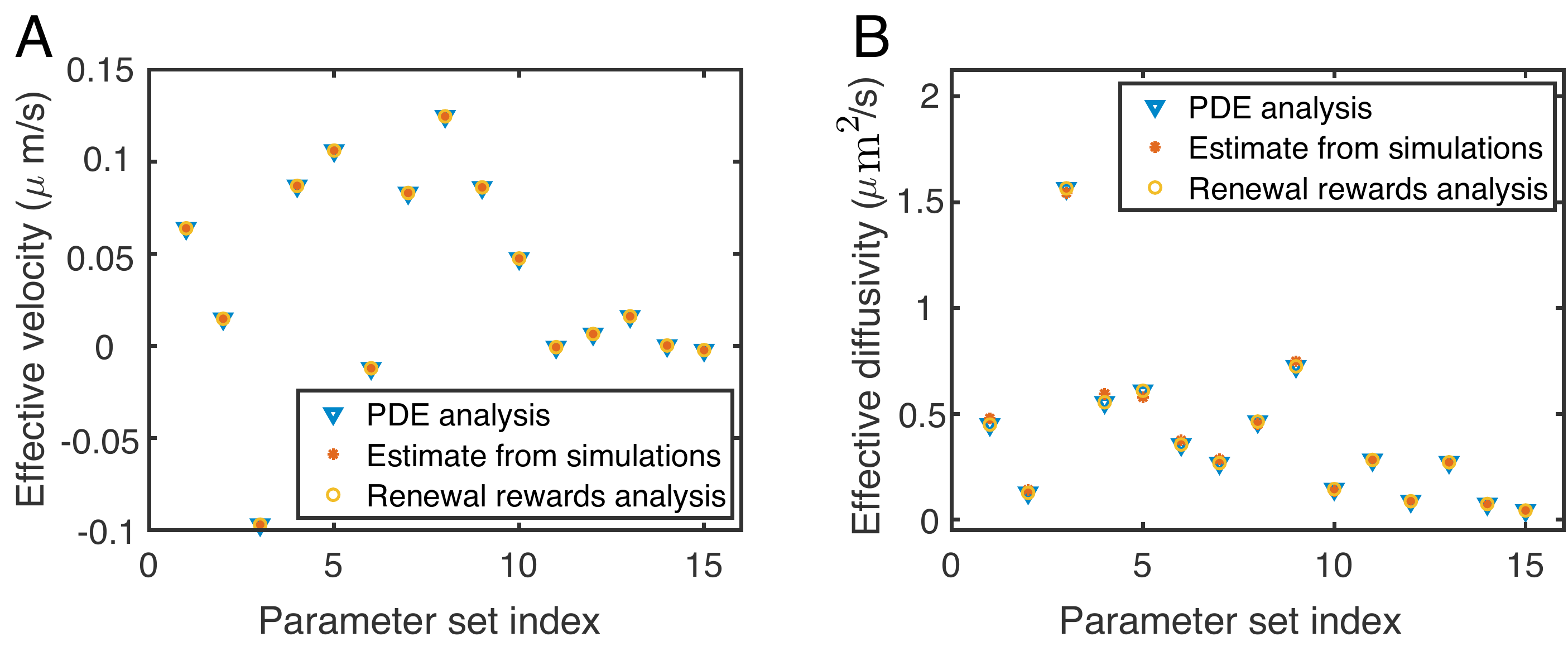}
\caption{Effective velocity (A) and effective diffusivity (B) of particles switching between diffusion, bidirectional transport, and stationary states as in \citet[Figure 3B]{ciocanel2017analysis} for different parameter sets. Blue triangles correspond to predictions based on the homogenization or equivalent analysis~\citep{ciocanel2017analysis} of the corresponding PDEs (Eq.~\ref{eq:n-state-mod}), filled red dots correspond to estimates from multiple simulated realizations of the Markov chain, and yellow circles correspond to predictions based on analysis of the corresponding renewal process model combined with Proposition~\ref{thm:stoc}. }\label{fig:comparison_4state}
\end{figure}

\section{Application to cooperative and tug-of-war models of cargo transport}
\label{sec:application_tugwar}

The framework presented here also extends to models of cargo particles driven by changing numbers of motor proteins. The analytical calculation of transport properties of cargo pulled by motors in the same or opposite directions could replace or complement costly numerical simulations of individual cargo trajectories. In the following, we consider both models of cooperative cargo transport with identical motors \citep{klumpp2005cooperative,kunwar2010robust} and tug-of-war models of bidirectional transport driven by identical or different motors moving in opposite directions \citep{muller2008tug,muller2010bidirectional}.

\subsection{Cooperative models of cargo transport}\label{sec:cooperative}
We start by considering the cooperative transport models described in \S~\ref{sec:exs}, Example~3, and studied in \citet{klumpp2005cooperative,kunwar2010robust}, with processive motors that move along a one-dimensional microtubule and transport cargo in only one direction. The dynamics is described by the force-driven velocities $v_n$, unbinding rates $\epsilon_n$ and binding rates $\pi_n$ in each state with $n$ motors bound to the microtubule and moving the cargo (see Eqs.~\ref{eq:cooperative_v}, \ref{eq:cooperative_eps}, and \ref{eq:cooperative_pi}). In this section, we use the kinetic parameters for conventional kinesin-1 provided in \citet{klumpp2005cooperative}.

Our calculation of the effective velocity of cargo agrees with the derivation in \citet{klumpp2005cooperative}, which uses the stationary solution of the master equation for probabilities of the cargo being in each state (i.e. carried by $n$ motors). We note that there are two notions of effective velocity (and diffusivity) that can be used in studying this model: one is to calculate the effective velocity of the cargo averaged over the bound states only (the asymptotic velocity without detachment along a theoretical infinite length microtubule) \citep{klumpp2005cooperative,kunwar2010robust}, and the second is to calculate the overall effective velocity that also accounts for periods of detachment from microtubules. For the $N=2$ motors model, \citet{klumpp2005cooperative} and \citet{kunwar2010robust} report the average velocity for bound cargo (first notion): 
\begin{equation}
\label{eq:veff_normbound}
v_{\mathrm{eff}} = v_1 \frac{\pi_0 \epsilon_2}{\pi_0 \epsilon_2+\pi_0\pi_1} + v_2 \frac{\pi_0 \pi_1}{\pi_0 \epsilon_2+\pi_0\pi_1}\,.
\end{equation}
Since we are interested in the overall effective velocity of the particles in the context of their full dynamics, we include the state where no motors are bound to the filament in our calculation, so that the effective velocity with respect to the overall dynamics is given by:
\begin{align}\label{eq:veff_normunbound}
v_{\mathrm{eff}} & = v_1 \frac{\pi_0 \epsilon_2}{\epsilon_1\epsilon_2 + \pi_0 \epsilon_2+\pi_0\pi_1} + v_2 \frac{\pi_0 \pi_1}{\epsilon_1\epsilon_2 + \pi_0 \epsilon_2+\pi_0\pi_1}\,.
\end{align}

Using the calculation of the overall effective velocity in \eqref{eq:veff_normunbound}, we predict a similar dependence  of the effective velocity under a range of force loads as in~\citet{klumpp2005cooperative} using the formula~\eqref{eq:veff_normbound}. The dashed curves in Figure~\ref{fig:subsuperlinear}A,C agree with the behavior of sub- and super-linear motors under different load forces as reported in \citet[Figure 2C-D]{kunwar2010robust}, including the fact that  sub-linear motors have lower effective velocities for any choice of the load force and for all maximum motor numbers $N$ considered (A, $w=0.5$), while super-linear motors are faster and therefore have larger effective velocities than linear motors (C, $w=2$).  

\begin{figure}[t]
\centering
\includegraphics[height=92mm]{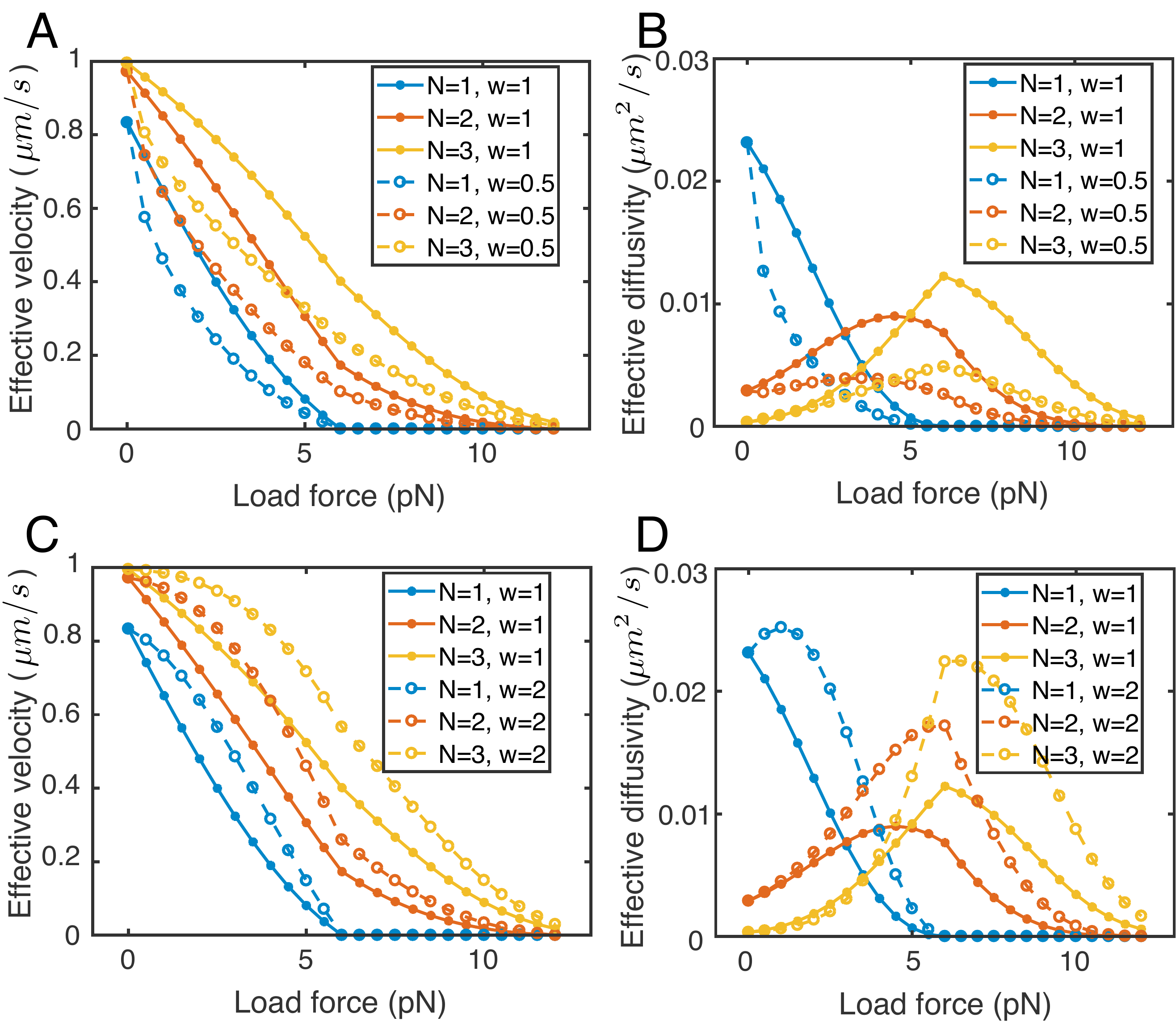} 
\caption{Effective velocity (A,C) and effective diffusivity (B,D) of cargo driven by a maximal number $N$ of forward motor proteins as a function of the load force under various velocity-force exponents $ w $ (Eq.~\ref{eq:cooperative_v}). The stall force used for kinesin is $F_s = 6$~pN. Solid lines correspond to motors with a linear force-velocity relation and dashed lines correspond to sub-linear motors with a convex-up force-velocity relation (top row), and respectively to super-linear motors with a concave force-velocity relation (bottom row).}\label{fig:subsuperlinear}
\end{figure}

The insight from our method lies in the prediction of the effective diffusivity as a function of load for each type of motor. Figure~\ref{fig:subsuperlinear}B,D show that the $N=1$ motor transport case has a large effective diffusivity under no load because of the switching between the paused and moving states. As the force load increases to stall $F_s$, the velocity of the single motor state decreases to 0: $v_1(F) = v\left(1-F/F_s \right)$. Therefore, the active transport state switches to a stationary state at $F=F_s=6$~pN, leading to decreased effective diffusivity as the cargo switches between dynamic states with similar behaviors.

For $N=2$ and $N=3$, the calculation of the effective diffusivity allows us to re-visit the cooperative transport models for a large range of load forces and observe a new phenomenon in the classical models of \citet{klumpp2005cooperative,kunwar2010robust}. The broader sweep of the load force parameter in Figure~\ref{fig:subsuperlinear}B,D shows a non-monotonic dependence of the effective diffusivity on load force for all types of motors considered (linear, sublinear, and superlinear), with an increase in effective diffusivity of cargo at low load forces and a decrease at large load forces. While it is not immediately clear what leads to this phenomenon, we conjecture that this observation may be a result of the balance between two competing effects: on the one hand, as the load increases, there is more detachment of motors (see \eqref{eq:cooperative_eps}) and thus more frequent switches between transport and stationary states, leading to an increase in effective diffusivity; on the other hand, the increase in load force leads to a decrease in the speeds of the motor-driven cargo states (see \eqref{eq:cooperative_v}), and thus a decrease in effective diffusivity.

\subsection{Tug-of-war models of cargo transport} \label{sec:tugwar}
In Example 4 in \S~\ref{sec:exs}, we consider the case where plus- and minus-directed motors can drive cargo bidirectionally along filaments. The cargo velocities $v_c(n_+,n_-)$, unbinding rates $\epsilon_{+/-}(n_+,n_-)$ and binding rates $\pi_{+/-}(n_{+/-})$ depend on the number of plus motors $n_+$ and minus motors $n_-$ at each state.
\begin{figure}[t]
\centering
\includegraphics[height=92mm]{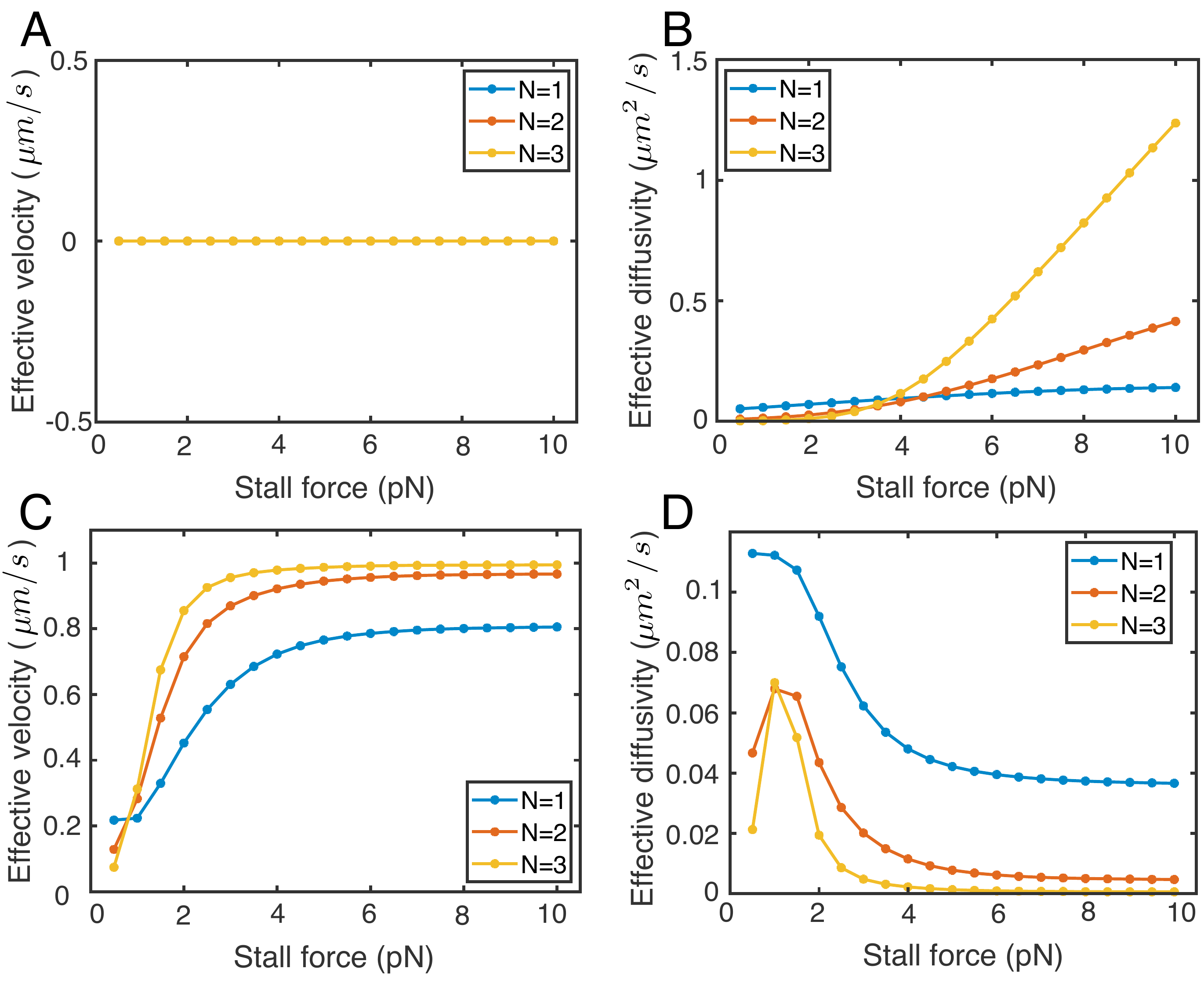} 
\caption{Effective velocity (A,C) and effective diffusivity (B,D) of cargo driven by maximum $N$ forward and maximum $N$ backward motor proteins as a function of kinesin-1 stall force $F_s$; the detachment force is $F_d = 3$ pN. Panels (A,B) correspond to identical forward and backward motors with kinetic parameters for kinesin-1, and panels (C,D) correspond to kinesin-1 forward motors and conventional dynein backward motors.}\label{fig:tugwar_kin}
\end{figure}

\textit{Identical plus and minus motors.} 
With kinesin parameters drawn from \citet[Table 1]{muller2008tug}, we first calculate the transport properties of cargo in these models for identical plus and minus motors in equal numbers ($N_+=N_-$). We vary the stall force of the kinesin motor to determine if the theoretical effective velocity and diffusivity capture the differences obtained in the numerical simulation studies in~\citet{muller2008tug,muller2010bidirectional} for weak motors (small stall to detachment force ratio $f = F_s/F_d$) and strong motors (large $f$). As expected, the effective velocity in this symmetric case of identical motors is zero for all stall forces (see Figure~\ref{fig:tugwar_kin}A). The predicted effective diffusivity in Figure~\ref{fig:tugwar_kin}B shows that for weak motors, the effective diffusivity is small, and different maximum numbers of motors do not lead to significant differences. This is similar to the results in \citet{muller2008tug}, where the simulated cargo trajectories show small fluctuations and the probability distribution for the velocity has a single maximum peak corresponding to approximately equal numbers of plus and minus motors attached. However, for strong motors with a larger stall to detachment force ratio, the effective diffusivity increases considerably for all models. This is consistent with the observation in \citet{muller2008tug} that strong motors lead to cascades of unbinding of minus motors until only plus motors stay bound (and vice versa), so that the spread of the cargo position is predicted to be larger. The larger motor numbers lead to a more significant increase in effective diffusivity as observed in \citet{muller2010bidirectional}, where the simulated diffusion coefficient grows exponentially with motor numbers and therefore leads to a more productive search of target destinations throughout the domain \citep{muller2010bidirectional}.

It is worth noting that the method we develop in \S~\ref{sec:procedure} extends to cases where slow diffusive transport rather than pausing is observed in the unbound state (see \S~\ref{sec:application_intra} for another example with a diffusive state). As expected, when the cargo has an intrinsic diffusion coefficient that is non-zero, the effective velocity of the cargo does not change, however the effective diffusivity is consistently larger than in the case where the unbound cargo is fully stationary (results not shown).

\textit{Distinct plus and minus motors.} 
When considering dynein as the minus-end directed motor in the bidirectional transport model, we use the kinetic parameters estimated to fit \textit{Drosophila} lipid droplet transport in \citet[Table 1]{muller2008tug}. Figure~\ref{fig:tugwar_kin}C shows that the cargo is predicted to move in the forward (kinesin-driven) direction with a positive effective velocity. We again observe increased transport efficiency for larger numbers of motors. With increasing stall force, the velocity of individual runs in each state increases and therefore the effective velocity increases and then plateaus. This asymmetric motor case also results in effective diffusivity that decreases past a small stall force and then stabilizes (see Figure~\ref{fig:tugwar_kin}D). Since the kinesin motor dominates the dynamics, there are fewer excursions backwards than in the case of identical motors, so that the effective diffusivity is an order of magnitude smaller. Larger teams of motors regularize the dynamics and display decreased effective diffusivity.

\subsection{Reattachment in models of cargo transport}
\begin{figure}[t]
\centering
\includegraphics[height = 92mm]{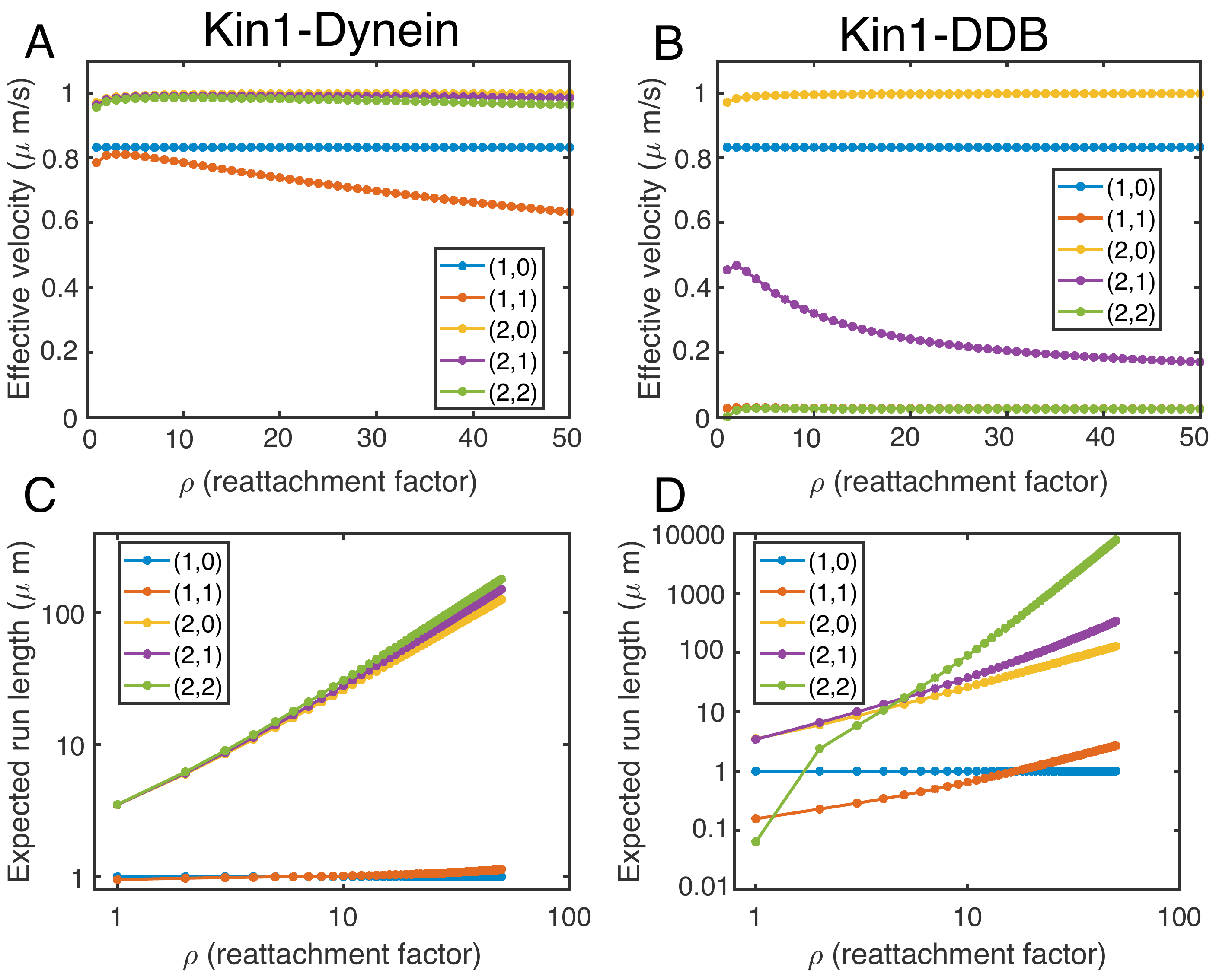} 
\caption{Effective velocity (A,B) and expected run length (C,D) of cargo driven by maximum $N_1$ forward (kinesin-1) and maximum $N_2$ backward (dynein) motor proteins $(N_1,N_2)$ as a function of the reattachment factor $\rho$. Panels (A,C) use conventional dynein motor kinetics as in \citet{muller2008tug,muller2010bidirectional} while panels (B,D) use dynein-dynactin-BicD2 (DDB) complex parameters as in \citet{ohashi2019load}. The run lengths in panels (B) and (D) are plotted on a log-log scale to allow for visualization of differences between the models considered. The definitions of effective velocity and run length used are provided in the text.}\label{fig:tugwar_rho}
\end{figure}

In vitro experiments have suggested that binding rates of molecular motors at specific locations may be regulated by the concentration of the same or opposite-directed motors \citep{hancock2014bidirectional}, as well as by the availability of microtubule filaments. To test for the impact of reattachment kinetics in the standard transport models of \citet{muller2008tug,muller2010bidirectional}, we modify the binding rate in \eqref{eq:tugwar_pi} to account for a higher likelihood of reattachment when a motor (of either type) is already attached to the microtubule:
\begin{align}
\pi_+(n_+,n_-) & = \begin{cases}
N_+\pi_{0+}, \mathrm{\: \: if \: \: n_++n_-=0} \,,\\
(N_+-n_+)\rho\pi_{0+}, \mathrm{\: \: \: \: else}  \label{eq:tugwar_pi_reattach}\,.
\end{cases}
\end{align}
Here $\rho>0$ denotes the reattachment factor, and an equivalent expression is valid for the binding rate for minus motors $\pi_-(n_+,n_-) $. $\rho=1$ corresponds to the binding kinetics in the previous sections, and $\rho>1$ denotes an increased reattachment likelihood when other motors are attached.

Figure~\ref{fig:tugwar_rho} illustrates the effective velocity (panels A,B) and the expected cargo run length (Eq.~\ref{eq:momts} for $\Delta X$, panels C,D) for values of $\rho$ ranging from $1$ to $50$, in the context of models labeled $(N_1,N_2)$ with transport driven by maximum $N_1$ forward (kinesin-1) motors and $N_2$ backward (dynein) motors. Here we report the overall effective velocity of the cargo according to the second definition in \S~\ref{sec:cooperative}, which includes both attached and detached cargo states in the calculation. In addition, the mean run length is calculated as the mean total displacement over a cycle starting with all motors detached until its return to a completely detached state, namely $\E(\Delta X)$ in Eqs.~\ref{eq:momts}. Note the base state of complete detachment makes no contribution to the mean displacement. 

Classically in tug-of-war modeling, dynein has been viewed as a ``weaker partner'' than kinesin-family motors. In this parameter regime \citep{muller2008tug,muller2010bidirectional}, dynein has both a smaller stall force and smaller critical detachment force than kinesin-1. As a result, when equal numbers of kinesin-1 and dynein are simultaneously attached, kinesin-1 dominates transport.  However, it has recently been shown that it might not be realistic to consider dynein in the absence of its helper proteins, particularly dynactin and BicD2. Together, these form a complex referred to as DDB, and the associated parameter values \citep{ohashi2019load} are much more ``competitive'' with kinesin-1 in a tug-of-war scenario. 

In Figure~\ref{fig:tugwar_rho}, we display the effective velocity and expected run length of kinesin-1 vs dynein (panels A,C), and kinesin-1 versus DDB (panels B,D) dynamics. In Figure~\ref{fig:tugwar_rho}A, the effective velocity of cargo driven by teams of motors approaches the effective speed predicted for kinesin-only motor teams (models $(1,0)$ and $(2,0)$) for small values of $\rho$, but then decreases as $\rho$ becomes larger for conventional dynein motility. As observed in recent studies, activated dynein competes more efficiently with kinesin and therefore the teams of opposite-directed motors are consistently slower than teams consisting of only the forward kinesin motor protein in Figure~\ref{fig:tugwar_rho}B \citep{ohashi2019load}. The expected run lengths in Figure~\ref{fig:tugwar_rho}C,D illustrate that teams of multiple motors are characterized by significantly increased processivity on microtubules as the reattachment factor becomes larger. When considering conventional dynein, the difference in processive cargo motion between the cooperative and tug-of-war models is only observed at large values of the reattachment constant $\rho$ ($>10$, see Figure~\ref{fig:tugwar_rho}C). This is due to the fact that the backward motor (conventional dynein) in the \citet{muller2008tug} model is weak with a small detachment force, so that overcoming the large dynein unbinding rate requires large values of the reattachment factor. On the other hand, activated dynein in the DDB complex is a more equal competitor to kinesin, with predictions of the expected run length in Figure~\ref{fig:tugwar_rho}D confirming the experimental observations of larger unloaded run lengths in \citet{ohashi2019load}.

\subsection{Microtubule sliding model}
\begin{figure}[t]
\centering
\includegraphics[height = 45mm]{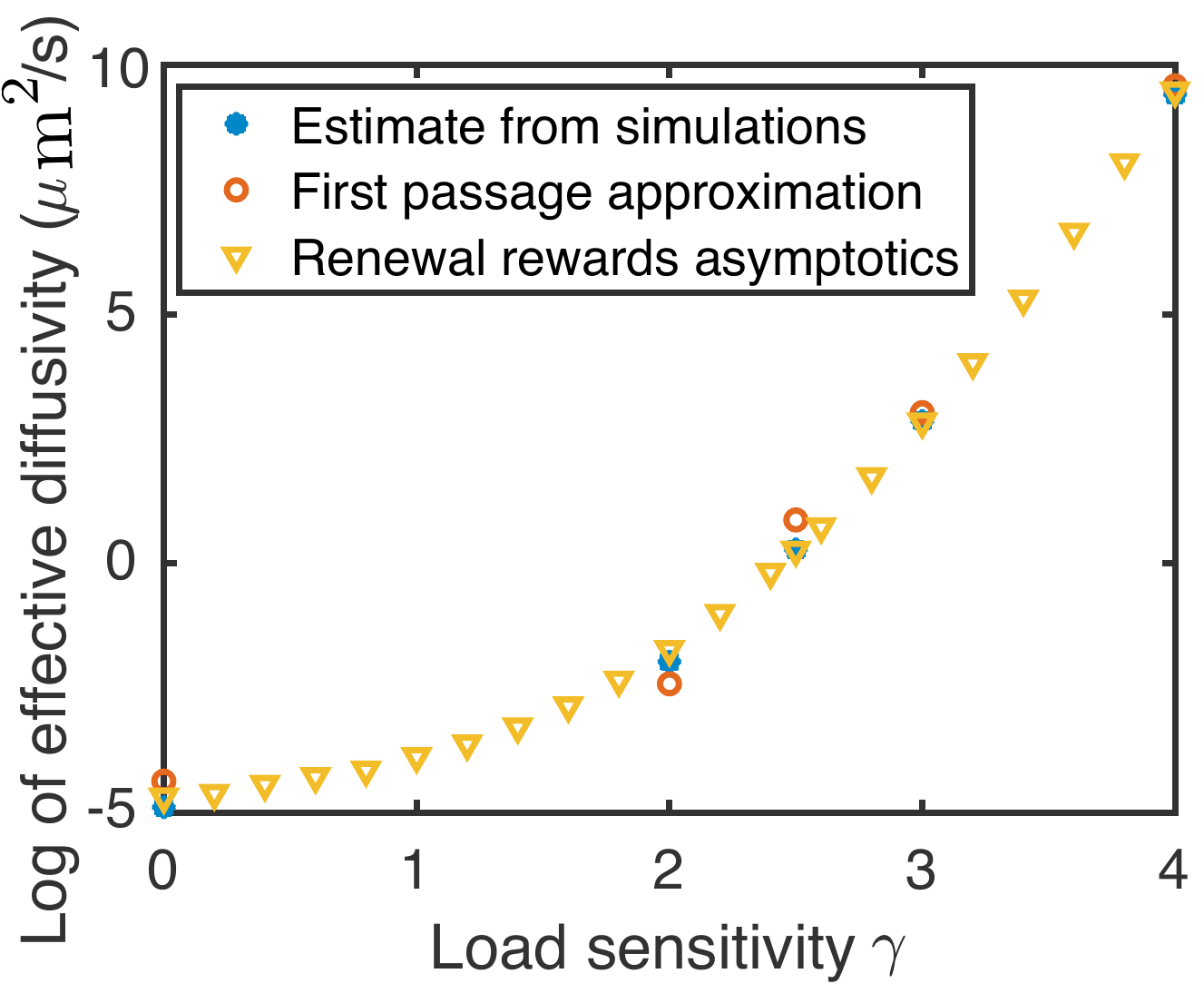} 
\caption{Comparison of effective diffusivity estimates for parallel microtubules driven bidirectionally by kinesin motors as a function of the scaled load sensitivity $\gamma$: stochastic simulations from \citet{allard2019sliding} marked with blue stars, first passage time approximation in \citet{allard2019sliding} marked with red circles, and renewal rewards calculation marked with yellow triangles. Following \citet{allard2019sliding}, we only allow states with $i$ kinesin motors moving forward and $K-i$ kinesin motors moving backward, with $0\le i \le K$ and $K=35$. The vertical axis is plotted on a log scale to allow for visualization of differences between effective diffusivity estimated from simulations and analytical approximations.}\label{fig:mt_sliding}
\end{figure}

As a final example of the applicability of our method, we consider a recent investigation into microtubule motility and sliding by \citet{allard2019sliding}. The authors consider a continuous-time Markov chain model of the interaction of two parallel microtubules, cross-linked and moved by multiple identical kinesin motors. Depending on which microtubule the motor heads are attached to, they push the microtubule pair apart in one of two directions (one of which is arbitrarily assigned to be ``positive''). The model assumes that motor attachment to microtubules occurs quickly relative to detachment, allowing a reduced number of dynamic states with microtubules driven by $i$ motors pushing in the positive direction  and $K-i$ motors pushing in the negative direction (where $K$ is the maximal number of motors that fit the overlap region between the two parallel microtubules). The detachment rates are therefore given by 
\begin{align}\label{eq:ki_rates_sliding}
\kappa_i^+ &= (K-i)\kappa_0 \exp{\left(\gamma \frac{i}{K}\right)} \,, \nonumber\\
\kappa_i^- &= i\kappa_0 \exp{\left(\gamma \frac{K-i}{K}\right)} \,, 
\end{align}
where $\kappa_i^+$ is the rate at which a motor pulling in the negative direction is replaced by one pulling in the positive direction, $\kappa_i^-$ is the is the rate at which a motor pulling in the positive direction is replaced by one moving in the negative direction, $\kappa_0$ is a force-free transition rate, and $\gamma$ is a  dimensionless load sensitivity defined as twice the stall force divided by the detachment force scale~\citep{allard2019sliding}. The relative velocity of the parallel microtubules in each state is given by
\begin{align}\label{eq:ki_rates_sliding}
\Delta v_i &= V_m \frac{2i-K}{K} \,, 
\end{align}
where $V_m$ is the speed of a single motor \citep{allard2019sliding}. 

A main point of this study is that parallel microtubules may slide bidirectionally with respect to each other, with a zero mean velocity due to symmetry, thus the long-term microtubule transport is characterized by diffusive behavior. The effective diffusivity of the microtubule pair driven by a total of $35$ motors is measured in \citet{allard2019sliding} through fitting the slope of the mean squared displacement in stochastic simulations at long time (stars in Figure~\ref{fig:mt_sliding}) and compared to a theoretical approximation in terms of a first passage time problem (open circles in Figure~\ref{fig:mt_sliding}).

Our method based on renewal rewards theory yields predictions of the effective diffusivity that are closer to the estimates derived from large-time stochastic simulations (marked with triangles in Figure~\ref{fig:mt_sliding}). To make the relationship between effective diffusivity and load sensitivity more clear, we illustrate results for many intermediary values.  Our proposed analytical framework also facilitates the possibility of subsequent systematic asymptotic approximations to study dependence on underlying biophysical parameters.

\section{Discussion}
\label{sec:conclusions}

In this work, we consider examples from the intracellular transport literature where particles undergo switching dynamics. In particular, we are interested in determining the effective velocity and diffusivity as well as the expected run length of these particles as they switch between biophysical behaviors such as diffusion, active transport, and stationary states. We propose a method that is based on defining the underlying Markov chain of state switches and the independent cycles of the dynamics marked by returns to a chosen base state. Emphasizing the cyclic structure of the behavior allows us to treat the time durations and spatial displacements of particles in these regenerative cycles as the cycle durations and rewards in a renewal-reward process. Through calculation of the statistics of cycle time and displacement, this robust framework provides a rigorous means to study how the dynamics of switching systems depends on model parameters.

This approach applies for example to canonical tug-of-war models describing the transport of cargo by teams of molecular motor proteins. Previous investigations of the effective transport of cargo in these multi-state models have considered individual trajectories of the dynamics, computed using Monte Carlo simulations with the Gillespie algorithm~\citep{muller2008tug,kunwar2010robust,muller2010bidirectional}. These studies determine the effective velocity of the particles analytically by calculating the distribution of the number of bound motors from the stationary solution of the master equation \citep{klumpp2005cooperative}. However, determining the effective diffusivity in these studies relied on numerical simulations. Our method proposes a faster and explicit investigation of the impact of model parameters on the effective diffusivity. For instance, Figure~\ref{fig:tugwar_kin} (top right) captures the different behavior of identical motor teams involved in tug-of-war dynamics when the ratio of stall to detachment force is small (weak motors with small effective diffusivity) versus large (strong motors with increasing effective diffusivity). This observation is consistent with simulations in \citet{muller2008tug}, where the large force ratios correspond to a dynamic instability where only one motor type is primarily bound at the end of an unbinding cascade \citep{muller2008tug,muller2010bidirectional}.

Multiple experiments summarized in \citet{hancock2014bidirectional} have shown that inhibition of one motor type reduces transport in both directions in several systems, suggesting a ``paradox of co-dependence'' in bidirectional cargo transport. Several mechanisms accounting for this paradox were proposed, including the microtubule tethering mechanism recently explored in \citet{smith2018assessing}. The hypothesis for this mechanism is that motors switch between directed active transport and a weak binding or diffusive state. The recent experimental study in \citet{feng2018motor} suggests that teams of kinesin-1 motors coordinate transport using help from the dynamic tethering of kinesin-2 motors. This work shows that when kinesin-1 motors detach, tethering of kinesin-2 to the microtubule ensures that cargo stays near the filament to allow for subsequent reattachment \citep{feng2018motor}. Our approach allows us to assess the dependence of the dynamics on a potentially increased reattachment rate for cargo that is already bound to the filament by at least one motor (Figure~\ref{fig:tugwar_rho}). Implementing this change in the standard binding models in \citet{muller2008tug,kunwar2010robust,muller2010bidirectional} for both kinesin-1/dynein and kinesin-1/DDB dynamics shows a decrease in overall effective velocity, but very large increases in potential run length. This could be consistent with the paradox in that experimentalists would observe more kinesin-directed activity when the reattachment rate is sufficiently high.

We have made Matlab and Mathematica sample code available for the calculation of effective velocity, diffusivity, and run lengths in a cooperative model of one-directional transport (as discussed in \ref{sec:cooperative}) and a tug-of-war model of bidirectional transport (as discussed in \ref{sec:tugwar}) \citep{Github_effdiff2019}. The code for these examples can be readily adapted to allow for a general probability transition matrix for the state dynamics, together with the probability distributions for the times and displacement in each state, to extend to other models of the processive movement of molecular motors and cargo transport. As the theory of how motors coordinate to transport cargo continues to develop at a rapid pace, the analysis developed here will provide a tool for new models accounting for tethered and weakly binding states with stochastic transitions whose rates do not depend on spatial position.

\bibliographystyle{spbasic} 
\bibliography{references}   




\end{document}